\documentclass[aps,twocolumn]{revtex4-1}
\usepackage{graphicx}
\usepackage{amsmath}
\usepackage{amssymb}
\usepackage{amsthm}
\usepackage{color}
\usepackage{hyperref}

\textheight=235mm

\theoremstyle{plain}
\newtheorem{thm}{Theorem}

\newtheorem{cor}[thm]{Corollary}
\newtheorem{lem}[thm]{Lemma}
\newtheorem{prop}[thm]{Proposition}

\theoremstyle{definition}
\newtheorem{defi}{Definition}

\newtheorem*{conj}{Conjecture}

\theoremstyle{remark}
\newtheorem*{rem}{\bf Remark}

\definecolor{labelkey}{rgb}{0,.56,.7}

\newcommand{\haf}{\mathop{{\mathrm{haf}}}\nolimits}

\DeclareMathOperator*{\spec}{spec}
\DeclareMathOperator*{\diag}{diag}

\def\bbZ{\mathbb{Z}}
\def\bbR{\mathbb{R}}

\def\bbI{\mathbb{I}}
\def\bbJ{\mathbb{J}}

\DeclareSymbolFont{Eulerscripteusm10}{U}{eus}{m}{n}
\SetSymbolFont{Eulerscripteusm10}{bold}{U}{eus}{b}{n}
\DeclareMathSymbol{\euI}{\mathord}{Eulerscripteusm10}{"4A}
\DeclareMathSymbol{\euK}{\mathord}{Eulerscripteusm10}{"4B}
\DeclareMathSymbol{\euR}{\mathord}{Eulerscripteusm10}{"52}

\DeclareMathAlphabet{\pazocal}{OMS}{zplm}{m}{n}   

\def\a{\alpha}
\def\b{\beta}

\def\D{\Delta}

\def\om{\omega}

\def\s{\sigma}

\def\la{\lambda}

\def\t{\theta}
\def\om{\omega}

\begin{document}

\title{Gaussian Boson Sampling for perfect matchings of arbitrary graphs}

\author{Kamil Br\'adler}
\email{kamil@xanadu.ai}
\author{Pierre-Luc Dallaire-Demers}
\author{Patrick Rebentrost}
\author{Daiqin Su}
\email{daiqin@xanadu.ai}
\author{Christian Weedbrook}

\affiliation{Xanadu, 372 Richmond Street West, Toronto, Ontario M5V 1X6, Canada }


\date{\today}

\begin{abstract}
{A famously hard graph problem with a broad range of applications is computing the number of  perfect matchings, that is the number of unique and complete pairings of the vertices of a graph.  We propose a method to estimate the number of perfect matchings of undirected graphs based on the relation between Gaussian Boson Sampling and graph theory. The probability of measuring zero or one photons in each output mode is directly related to the hafnian of the adjacency matrix, and thus to the number of perfect matchings of a graph. We present encodings of the adjacency matrix of a graph into a Gaussian state and show strategies to boost the sampling success probability. With our method, a Gaussian Boson Sampling device can be used to estimate the number of perfect matchings significantly faster and with lower energy consumption compared to a classical computer. 
}
\end{abstract}

\maketitle


\section{Introduction}

Graphs appear in a variety of situations, ranging from computer science and the physical sciences to economics and social sciences  \cite{Barabasi1999,clrs2001algorithms,Otte2002}. For example, graphs model the flow of computation in an algorithm, the structure of a molecule or the molecular graph of a photosynthetic light-harvesting organism \cite{Rebentrost2009}, the supply chain of an online retailer, or the network of people on social media. Graphs consist of vertices representing people, atoms, or warehouses, and edges representing connections between the vertices. Solving problems on graphs can help identifying highly connected influencers in a social network, improving the efficiency of solar cells \cite{Wodo2012}, or reducing the time till a customer receives an online order.

Common computational problems on graphs involve finding subgraphs with defined properties such as high connectivity (cliques), routing problems such as the traveling salesman problem, network flow problems, or the coloring of graphs \cite{west2001introduction}.
Many graph problems are easy to compute, that is they run in time that is at most a polynomial in the number of vertices and edges. For example, the shortest path in a graph or the maximum matching in a graph can be solved in polynomial time.
Other problems are famously hard to solve \cite{karp1972combinatorial}, running in a time that is exponential in either the number of vertices or the number of edges. Examples include the NP-complete clique  and traveling salesman problems.

Quantum algorithms promise to solve certain problems much faster than classical algorithms \cite{Montanaro2016}. Recently, it was discovered that the output probability distribution of photons from a multiport whose input is a Gaussian squeezed state~\cite{hamilton2016gaussian} depends on the~\emph{hafnian} of a  sampling matrix. The method, called Gaussian Boson Sampling, is a natural generalization of the original photonic boson sampling~\cite{aaronson2011computational}, and its modifications~\cite{lund2014boson,huh2017vibronic,huh2017vibronic,spring2012boson,olson2015sampling}, where single photons or vacuum are input states of a linear interferometer. The hafnian was introduced in the context of interacting quantum field theory~\cite{caianiello1953quantum} and belongs to the class of matrix functions together with the permanent, Pfaffian or determinant~\cite{minc1984permanents}. Unlike the latter two functions, the permanent and the hafnian are believed to be hard to calculate since they belong  to the $\#$P-complete complexity class and both functions (the permanent in particular) have been studied in detail~\cite{glynn2010permanent,valiant1979complexity,troyansky1996quantum,jerrum2004polynomial,rudelson2016hafnians,bjorklund2012counting,barvinok2017combinatorics,scheel2004permanents}. The class $\#$P involves counting problems such as how many solutions exist to a
given problem in the class~NP.

In this work, we connect Gaussian Boson Sampling to graph theory in order to tackle graph problems that involve the number of perfect matchings~\cite{lawler1976combinatorial,wolsey1999combinatorial}. This connection is based on the fact that  the number of perfect matchings of an undirected graph
equals to the hafnian of the graph's adjacency matrix. We propose an efficient way to embed an arbitrary adjacency matrix into a valid Gaussian state.
The desired Gaussian state can be generated by injecting thermal states or vacuum states into a circuit consisting of single-mode squeezers and a multiport. The information of the
required input states, single-mode squeezers and the multiport is completely determined by the adjacency matrix. Thus, the whole procedure is programable
and is straightforward to implement, and therefore can allow the estimation of the number of perfect matchings in arbitrary graphs. 
Promisingly, by appropriate rescaling of the problem and  linearly increasing the size of the circuit we can substantially improve the sampling success probability. 
We demonstrate our methods for the special cases of complete graphs and one-edge-removed complete graphs.

This paper is organized as follows. Sec. \ref{sec:results} summarizes the main results. 
We first discuss embedding an adjacency matrix into a mixed or pure Gaussian state, and connect 
the hafnian of the adjacency matrix to the measurement probability. We then discuss the hardness of estimating the hafnian using Gaussian Boson Sampling and propose 
strategies to amplify the measurement probability. We finally discuss a complete graph example to illustrate the validity of the method. We conclude in Sec. \ref{sec:conclusion}. 

\section{Results}\label{sec:results}
\subsection{Gaussian Boson Sampling and hafnian}
As shown in \cite{hamilton2016gaussian} the output probability distribution of photons from the multiport with a Gaussian input state is related to the hafnian. In particular, the authors showed that the probability of the photon number distribution, with no more than one photon in each output mode, is related to the hafnian of the submatrices of a matrix $A$, derived from the Gaussian covariance matrices $\s_A$. For an arbitrary $M$-mode Gaussian state,
$A$ is related to $\s_A$ via \cite{hamilton2016gaussian}
\begin{equation}\label{eq:sigmaA-1}
A=X_{2M} \big[ \bbI_{2M}- (\sigma_A+\bbI_{2M}/2)^{-1} \big],
\end{equation}
where 
$\bbI_{2M}$ is the identity matrix and $X_{2M}$ is defined as
$
X_{2M}=\begin{bmatrix}
           0 & \bbI_M \\
           \bbI_M & 0 \\
         \end{bmatrix}.
$
In more detail, assume that $n = \bigotimes_j^M n_j | n_j\rangle \langle n_j|$ represents a certain measurement pattern of photons, where $n_j$ is the photon number in the $j$-th mode. In the case $n_j = \{ 0, 1\}$, the measurement probability is given by~\cite{hamilton2016gaussian}
\begin{equation}\label{eq:ProHafOrig}
  \mathrm{Pr} (\bar n) = {1\over \bar n! \sqrt{\det{\sigma_Q}}} \haf{A_S},
\end{equation}
where $\bar n = n_1! n_2! \dots n_M!$, $\s_Q \equiv \s_A + \bbI_{2M}/2$ and $A_S$ is the submatrix of $A$ determined by the position of the output modes registering single photons.
In particular, finding the probability of detecting a single photon in every output mode corresponds to estimating the hafnian of the full matrix $A$.

\begin{figure}[]
\includegraphics[width=8.6cm]{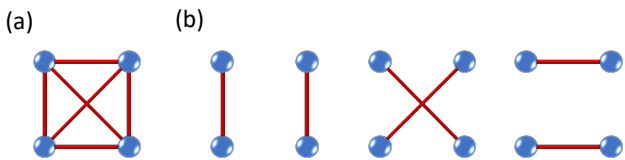}
\caption{ a) Complete graph of four vertices. b) Three perfect matchings of the four vertex complete graph. }
\label{fig:PM}
\end{figure}

A perfect matching of a graph is when every vertex of the graph is matched up with a single unique other vertex. Often there are multiple ways of perfectly matching vertices, as demonstrated for a small example in Fig.~\ref{fig:PM}.  Estimating the number of such perfect matchings is known to be difficult for a classical computer as the size of the graph increases \cite{bjorklund2012counting, Cygan2015}.
It is also known that the number of perfect matchings is equal to the hafnian of a graph's adjacency matrix. Motivated from Eq. \eqref{eq:ProHafOrig},
we ask whether it is possible to estimate the number of perfect matchings via measuring the probability of the photon number distribution.

The key result of this work is to map general graphs to the optical setup of Gaussian Boson Sampling. We start first with inverting Eq.~(\ref{eq:sigmaA-1}) as
\begin{equation}\label{eq:sigmaA}
\sigma_A=(\bbI_{2M}-X_{2M}A)^{-1}-{\bbI_{2M}\over2}.
\end{equation}
Given this relation, a naive idea is to consider $A$ as the adjacency matrix of a graph and
insert it into Eq. (\ref{eq:sigmaA}) to find the corresponding covariance matrix. Unfortunately, this fails because a general adjacency matrix
does not map to a valid covariance matrix through Eq. (\ref{eq:sigmaA}). An adjacency matrix is a real symmetric matrix and has entries either zero or one. The hafnian
of a large size adjacency matrix is usually a large number, e.g., a complete graph with $2M$ vertices has $(2M-1)!!$ perfect matchings. 
Intuitively, Eq. (\ref{eq:ProHafOrig}) does not hold if $A$ is simply regarded as an adjacency matrix
because the probability is a number smaller than one. To make things reasonable one has to rescale the adjacency matrix by a small number.

We develop a systematic method to map an arbitrary adjacency matrix of an undirected graph to a valid covariance matrix of a Gaussian state (see Appendix \ref{sec:GraphToCM} for details). We associate the number of perfect
matchings with the probability of a photon number distribution. A covariance matrix is by its nature a Hermitian and positive-definite matrix, and satisfies the uncertainty principle \cite{simon1994quantum}. Following these three constraints, we vary an arbitrary adjacency matrix such that it maps to a valid covariance matrix.

\subsection{Sampling mixed Gaussian covariance matrices}\label{sec:SampleMixState}

\begin{figure*}[ht!]
\includegraphics[width=15cm]{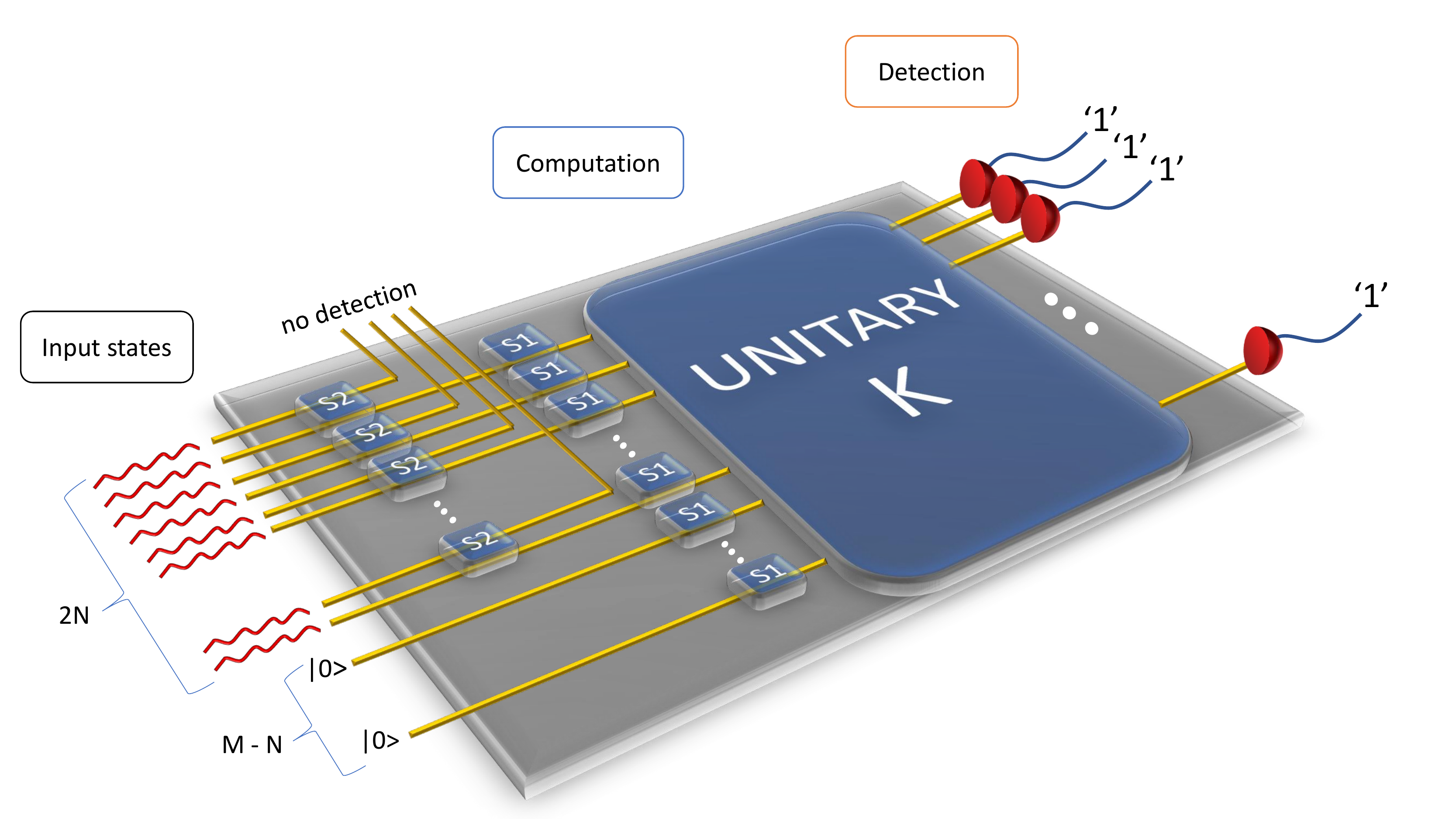}
\caption{\small $(M+N)$-mode circuit that generates $M$-mode mixed Gaussian covariance matrices $\s_{cA}$. $N$ input thermal states are generated by
tracing out the other $N$ modes of $N$ two-mode squeezed vacuum states with squeezing parameters $\xi_k$.
Then $N$ single-mode squeezers squeeze the thermal states and the other
$(M-N)$ single-mode squeezers with squeezing parameters $r_k$ squeeze the vacua. Photon number detectors measure the state after the linear optics network $K$. }
\label{fig:MixCircuit}
\end{figure*}

Here, we discuss the conditions under which we can map a graph adjacency matrix $A$ to a mixed Gaussian covariance matrix.
A $2M\times 2M$ adjacency matrix $A$ can be written in a block form, with four $M \times M$ submatrices $A_{11}$, $A_{12}$, $A_{21}$ and $A_{22}$. If these submatrices satisfy
the following conditions: (1) $A_{11} = A_{22}$ and $A_{12} = A_{21}$, (2) $A_{11}$ commutes with $A_{12}$, (3) $A_{12}$ is positive definite; then $c A$ maps to a valid
covariance matrix $\s_{cA}$ via Eq. (\ref{eq:sigmaA}), where $c$ is a rescaling parameter satisfying $0 < c < 1/\lambda_1$ and $\lambda_1$ is the maximal eigenvalue of $A$.
Generally, $\s_{cA}$ is an $M$-mode mixed Gaussian covariance matrix. 
This can be seen from its symplectic eigenvalues \cite{weedbrook2012gaussian} (see also Appendix \ref{app:PPE}), given by
\begin{equation}\label{SymplecticEV}
\nu_k = \frac{1}{2} \sqrt{\frac{(1+c h_k)^2 - c^2 f_k^2}{(1-c h_k)^2 - c^2 f_k^2}} \ge \frac{1}{2},\quad k = 1, 2, \hdots, M,
\end{equation}
where $f_k$ and $h_k$ are eigenvalues of $A_{11}$ and $A_{12}$, respectively. Note that $f_k \pm h_k$ are eigenvalues of the adjacency matrix $A$. 
A Gaussian state is pure if $\nu_k = 1/2$ for all $k$ \cite{weedbrook2012gaussian}. From Eq. (\ref{SymplecticEV}), this happens 
when all eigenvalues of $A_{12}$ are zero, namely, $A_{12}=0$.

Preparing a general mixed Gaussian state is not usually an easy task. According to Williamson's theorem~\cite{williamson1936algebraic},
a covariance matrix can be put into a diagonal form by a symplectic transformation.
The symplectic transformation can be further decomposed into a product of an orthogonal matrix $K$, a direct sum of $M$ single-mode squeezing matrices
and another orthogonal matrix $L$. This implies a mixed Gaussian state can be generated by injecting single-mode thermal states (or vacuum) into an $M$-mode
linear optics network $L$, passing through $M$ single-mode squeezers and then another linear optics network $K$.

We observe that to produce the covariance matrix $\s_{cA}$, the orthogonal matrix $L$ can be set to an identity. Therefore the state before entering the linear
optics network $K$ is a product of squeezed thermal states or squeezed vacuum states. The thermal state is completely characterized
by the symplectic eigenvalues $\nu_k$, Eq. \eqref{SymplecticEV}, which is dependent on the property of the adjacency matrix $A$ and the rescaling number $c$.
The squeezing parameters $r_k$ of $M$ single-mode squeezers are also determined by the adjacency matrix and $c$ 
from $e^{4r_k} = \frac{(1+c |f_k|)^2-c^2 h_k^2}{(1-c |f_k|)^2-c^2 h_k^2}$.
A single-mode thermal state can be purified with another thermal state, forming a pure two-mode squeezed state. The two-mode squeezing parameter $\xi_k$ is related to the symplectic eigenvalue $\nu_k$ via $2\nu_k = \cosh{[2\xi_k]}$. Suppose that $A_{12}$ has $N$ nonzero eigenvalues,
then $N$ of the $M$ input modes should be injected by thermal states and others by vacuum states. To purify the $N$ thermal states we introduce $N$ two-mode squeezers which squeeze the vacuum. The $(M+N)$-mode circuit that can produce a $M$-mode mixed Gaussian state is shown in Fig. \ref{fig:MixCircuit}.

If we already prepared the state $\s_{cA}$ via the circuit in Fig. \ref{fig:MixCircuit}, we then detect all output modes and evaluate the probability of detecting a single
photon in every output mode in order to sample the full adjacency matrix. The probability is related to the hafnian of the adjacency matrix $A$ by
\begin{align}\label{eq:probA}
&\mathrm{Pr}_{cA}(\underbrace{1,\dots,1}_{M})  \nonumber\\
&= \prod_{k=1}^{M} \sqrt{[1-c(f_k + h_k)] [ 1+c(f_k - h_k)]} ~c^M \haf{A}. \nonumber\\
\end{align}
Therefore we can obtain the hafnian of $A$ once we know the probability.

\subsection{Sampling pure Gaussian covariance matrices}\label{sec:SamplePureState}

\begin{figure*}[ht!]
  \resizebox{12cm}{!}{\includegraphics{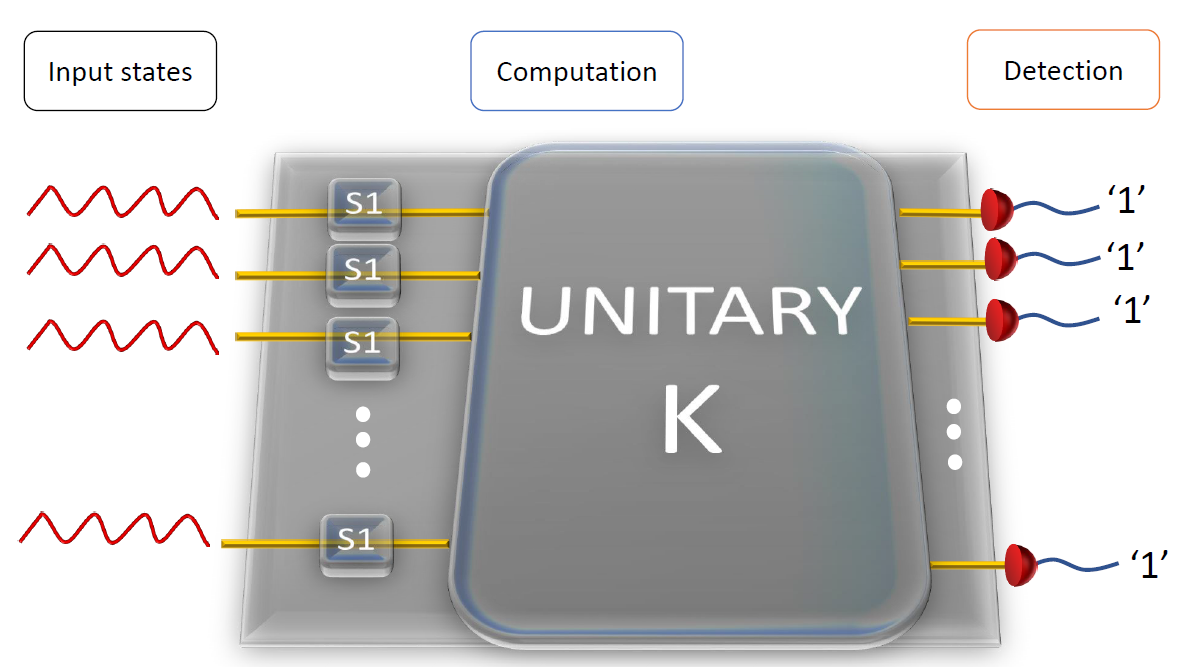}}
  \caption{ $2M$-mode canonical circuit that generates pure Gaussian covariance matrices $\s_{c A^{\oplus 2}}$. $2M$ single-mode squeezed states are injected   into a linear optics network $K$ to produce the target covariance matrix. The squeezing parameters $r_k$ ($r_k$ could be zero and corresponds to a vacuum input)   are completely determined by the eigenvalues of the adjacency matrix $A$ and  the rescaling parameter~$c$. In the output, $2M$ photon number detectors count the photon   number and evaluate the probability of measuring a single photon in every output mode.   }
  \label{fig:PureCircuit}
\end{figure*}

Generally, an adjacency matrix does not satisfy the three conditions discussed in Sec. \ref{sec:SampleMixState} and thus does not map to a mixed Gaussian covariance matrix. The strategy to overcome this difficulty is to embed $A$ into a bigger adjacency matrix which can be mapped to a valid covariance matrix. While there are many ways to do that, one straightforward way is to take two copies of the desired adjacency matrix $A$ and consider a new adjacency matrix $A^{\oplus 2} = A \oplus A$ instead. In this case, the new adjacency matrix satisfies the three conditions in Sec. \ref{sec:SampleMixState} and more importantly it maps to a pure Gaussian covariance matrix $\s_{c A^{\oplus 2}}$ (see Appendix \ref{sec:GraphToCM}). 
A pure Gaussian covariance matrix is easier to prepare than a mixed one. The price we have to pay is to double the number of modes, which means the circuit would be twice as large.

As a special case of the decomposition shown in Fig. \ref{fig:MixCircuit}, a pure Gaussian covariance matrix can be produced by injecting single-mode squeezed states (or vacuum) into a linear optics network. The squeezing parameter $r_k$ of the $k$-th input single-mode squeezed state is determined by the eigenvalues of the adjacency matrix $A$ and the rescaling parameter $c$ as $e^{2r_k} = (1+c|\lambda_k|)/(1-c|\lambda_k|)$, where $\lambda_k$ are the eigenvalues of the adjacency matrix $A$. Therefore given an adjacency matrix we can immediately determine the input squeezed states and the linear optics network according to the Euler decomposition, as shown in Fig. \ref{fig:PureCircuit}. We then detect the output state and evaluate the probability of measuring a single photon in each output mode. The hafnian of the adjacency matrix $A$ is related to the probability via
\begin{equation}\label{eq:probAoplA}
\mathrm{Pr}_{cA^{\oplus2}}(\underbrace{1,\dots, 1}_{2M}) = \prod_{k=1}^{2M} \sqrt{1-c^2 \lambda_k^2} ~c^{2M} \haf^{2}{A}\,.
\end{equation}

\subsection{Scalability}

The magnitude of the sampling probability determines the hardness of performing the experiment, e.g., a lower probability will take a longer time to collect the data.
We find that the sampling probability depends on the eigenvalues of the adjacency matrix, as well as the rescaling parameter $c$ (see Appendix \ref{app:SGCM}). In general, $c$ is smaller than one so the probability exponentially decreases as the size of the adjacency matrix 
increases. To have an optimal probability one has to increase the value of $c$. However, $c$ is restricted by $0 < c < 1/\la_1$, so the best one can do is to make $c$ as close as possible to $1/\la_1$. The value of $c$ is critical in terms of how tractable the problem becomes due to the exponential scaling. Such an effect is not unexpected. We probably cannot hope for an efficient quantum algorithm to count the number of perfect matchings for all graphs as a function of the graph size and connectedness. However, the way it manifests itself is intriguing.

The largest graph eigenvalue appears to be determining the hardness of sampling the corresponding covariance matrix (see the conjecture in Appendix \ref{app:scalability}). Note that spectral graph theory~\cite{brouwer2011spectra} is a rich field and the largest eigenvalue of the graph's adjacency matrix can play an important role~\cite{cvetkovic1990largest}.
Based on these insights, we can further `hack' the adjacency matrix by i) decreasing the maximal eigenvalue such that the matrix's hafnian remains the same, or ii) back-calculating the hafnian from multiple equivalent sampling probabilities. A more exhaustive list of possible improvements will be subject of future work.

The first strategy is to add/subtract a diagonal matrix to the adjacency matrix $A$, e.g., $A \rightarrow A + d \, \bbI$, which does not change the hafnian since self-loops are irrelevant
for the number of perfect matchings. This modification may result in a negative matrix and, correspondingly, we need to modify the constraint of c to be $0 < c < 1/|\lambda_1|$
(see Appendix \ref{app:scalability} for details).

The second strategy is to extend the adjacency matrix $A$ of size $2M$ and create a bigger adjacency matrix $\mathcal{A}$ of size $2nM$, where $n$ is an integer (see Appendix \ref{app:SubgraphSampling} for details).
$\mathcal{A}$ is chosen such that it has many submatrices that are equal to the adjacency matrix $A$. Since each of these submatrices is related to the hafnian of the adjacency
matrix $A$, we can sum up all the probabilities corresponding to sampling one of these submatrices. The amplification of the total probability is achieved. The price we have to pay is an increase in the number of modes $n$ times.

\subsection{Example: complete graphs}

To illustrate the method we have developed, consider a concrete example: estimating the number of perfect matchings of a complete graph by sampling the probability
of photon number distribution. Note that the number of perfect matchings in complete graphs is given by an analytical expression.
However, in this section we consider the case of complete graphs to illustrate our methods for scalabilty. Furthermore, in Appendix \ref{app:SubgraphSampling-2}, we give an explicit example of a noncomplete graph.
For a complete graph with $2M$ vertices, denoted as $K_{2M}$, all vertices are connected to each other and all entries of its adjacency matrix
$A_{K_{2M}}$ are one except the diagonal elements.
The number of perfect matchings is known to be $(2M-1)!!$. We estimate
the magnitude of the probability and the hardness of sampling.

According to the conditions discussed in Sec. \ref{sec:SampleMixState}, the adjacency matrix of a complete graph maps to a mixed covariance matrix. It is then
straightforward to calculate the probability by using Eq. (\ref{eq:probA}) and the number of perfect matchings. The rescaling parameter $c$ can be chosen to maximize the
probability. We can also modify the adjacency matrix by subtracting a diagonal matrix to reduce the absolute value of the maximal eigenvalue. The results for a 20 vertex
complete graph are shown in Fig. \ref{fig:K20}. Modifying the adjacency matrix significantly increases the probability.

\begin{figure}[ht!]
\includegraphics[width=8.6cm]{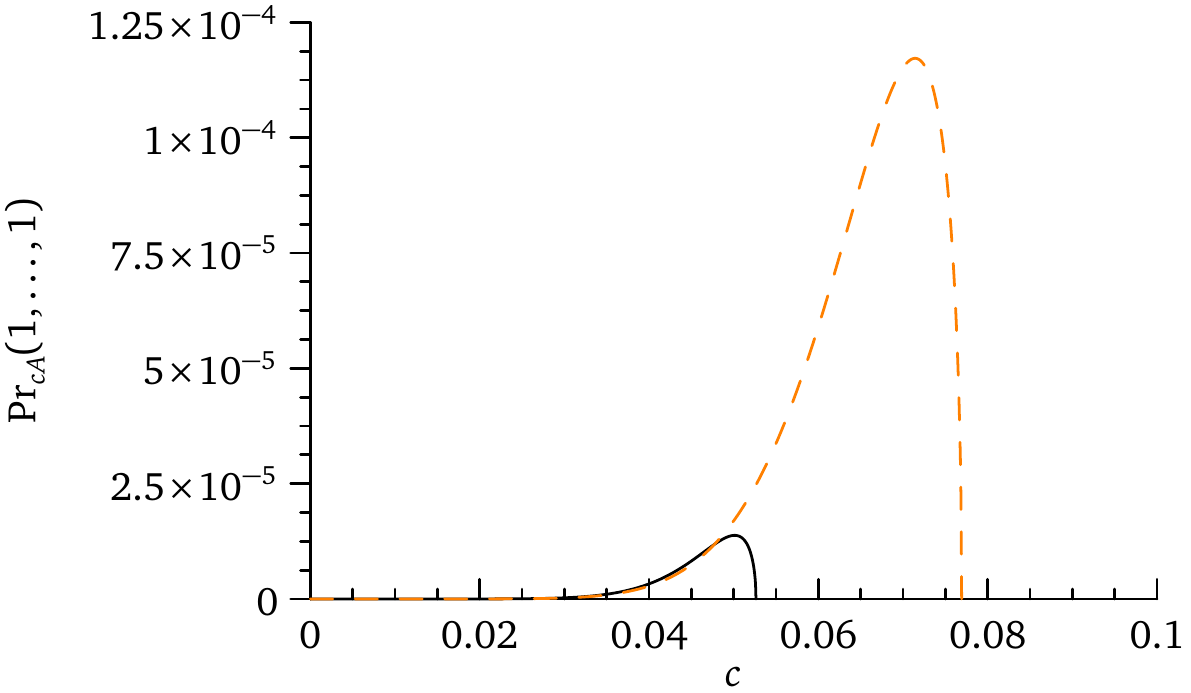}
\caption{ The probability of photon measurement for the complete graph on twenty vertices as a function of the rescaling parameter $c$, defined by $A_{K_{20}}  \to c A_{K_{20}} $, where $A_{K_{20}} $ is the graph's adjacency matrix. The black curve corresponds to original adjacency matrix and the dashed orange curve corresponds to modifying  the adjacency matrix as $A_{K_{20}} \rightarrow A_{K_{20}}- 6\, \bbI_{20}$. }
\label{fig:K20}
\end{figure}

We can also take two copies of the adjacency matrix $A_{K_{2M}}$ and sample a $2M$-mode pure Gaussian covariance matrix. However, we found that this is not optimal
for complete graphs. In fact, we can extend $K_{2M}$ to a bigger complete graph $K_{2n M}$. We take two copies of $A_{K_{2 n M}}$,
subtract a diagonal matrix to $A_{K_{2 n M}} \oplus A_{K_{2 n M}}$ and then sample a $2 n M $-mode pure covariance matrix. To relate to the hafnian of
$K_{2M}$, we have to consider the probability of measuring single photons in $2M$ modes and zero photons in the other modes. The total probability increases because we have $\binom{2nM}{2M}$ ways of selecting $2M$ modes from $2nM$ modes. The price we have to pay is to increase the size of the circuit. 

We find that the maximal squeezing that is required to achieve maximal measurement probability as a function of the size of a complete graph is given by
$10 \log_{10}\bigg[\frac{\sqrt{(2m+1)(n+1)}+\sqrt{2m(n+1)-1}}{\sqrt{(2m+1)(n+1)}-\sqrt{2m(n+1)-1}} \bigg]$ dB. 
Fig. \ref{fig:SqueezingMaxPro} shows an example for $n=16$. It can be seen that even for a relatively small-size graph, the amount of squeezing is demanding. From an experimental perspective, it is interesting to know how much probability one can achieve for a limited amount of squeezing. Fig. \ref{fig:ProSqueezing} shows the measurement probability as a function of the squeezing for a given complete graph. It can be seen that for a moderate amount of squeezing, e.g., 10 dB, the probability can go up to about $10^{-5}$ for a complete graph with $40$ vertices, and therefore
can be sampled in a very short time. Currently, one has access to gigahertz detectors, which means sampling at a rate of $10^4$ per second, where the longer the sampling
time the higher the accuracy one can achieve.

\begin{figure}[ht!]
\includegraphics[width=8.6cm]{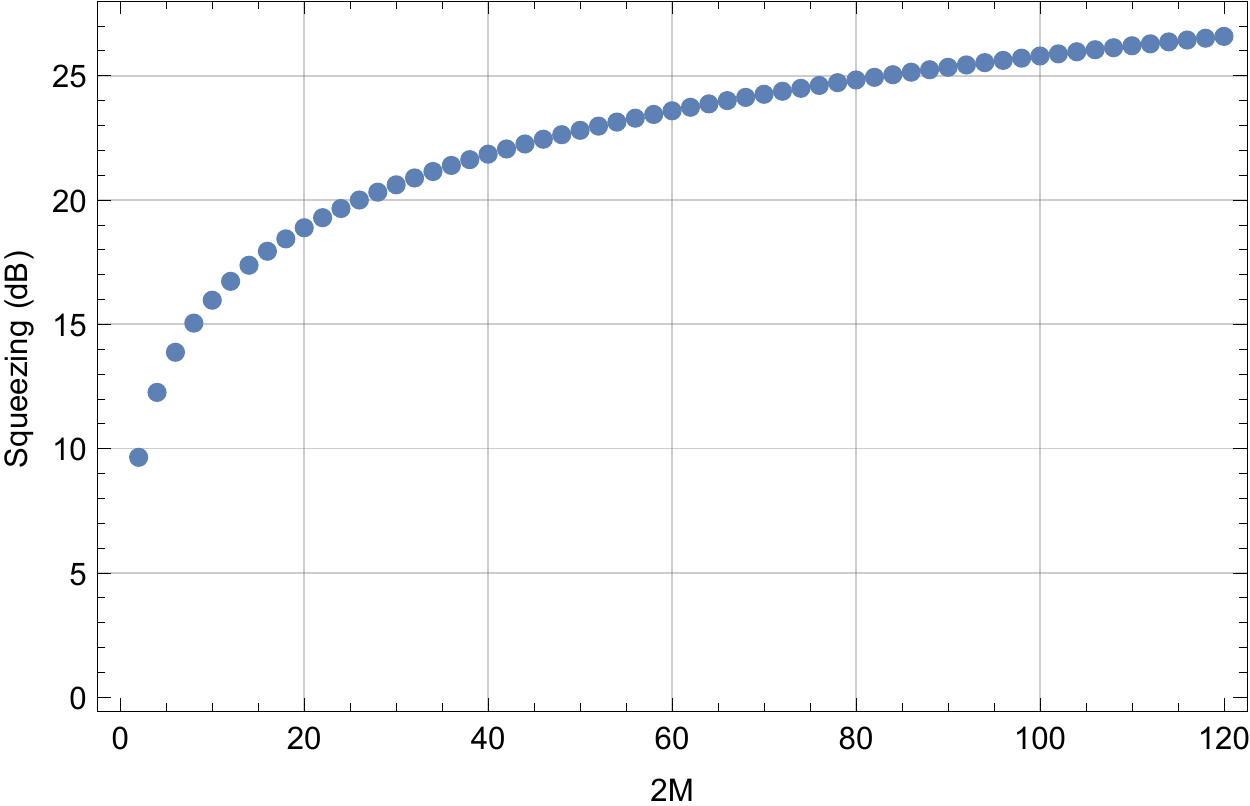}
\caption{ The maximum squeezing that is required to achieve the maximal measurement probability as a function of graph size $2M$. We assume that a complete graph $K_{2M}$ is embedded in a bigger
complete graph $K_{2nM}$ with $n=16$. Optimal rescaling factors are chosen to obtain the maximum probability. }
\label{fig:SqueezingMaxPro}
\end{figure}

\begin{figure}[ht!]
\includegraphics[width=8.6cm]{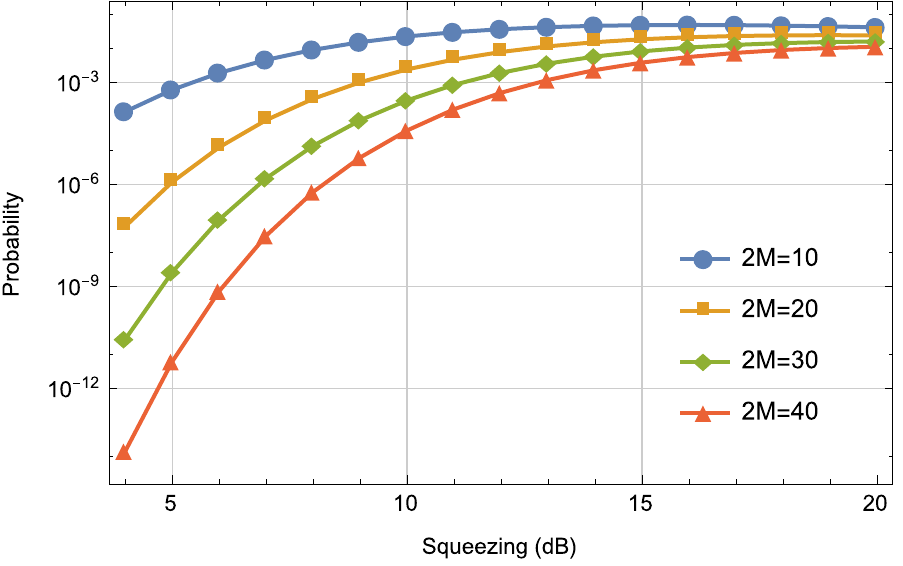}
\caption{ The maximum measurement probability as a function of the amount of squeezing for different graph sizes $2M$. Here, we assume that a complete graph $K_{2M}$ is embedded in a bigger
complete graph $K_{2nM}$ with $n=16$. Optimal rescaling factors are chosen to obtain the maximum probability. }
\label{fig:ProSqueezing}
\end{figure}

Surprisingly, we find that the discussed strategies can significantly increase the sampling probability for an extremely large complete graph (for a complete graph with one edge removed as well).
Fig.~\ref{fig:K5000} shows an example for a complete graph on 5000 vertices.
The maximal probability can go up to about $4\times 10^{-5}$. The number of modes required to achieve this high probability is $320000$  ($n=64$). 
Such a large circuit is challenging for current technologies -- the maximal squeezing that is required to achieve the maximal
probability is about 42.94 dB. Without this procedure the sampling of a well connected graph with more than one hundred vertices would be hopelessly unlikely.

\begin{figure}[ht!]
\includegraphics[width=8.6cm]{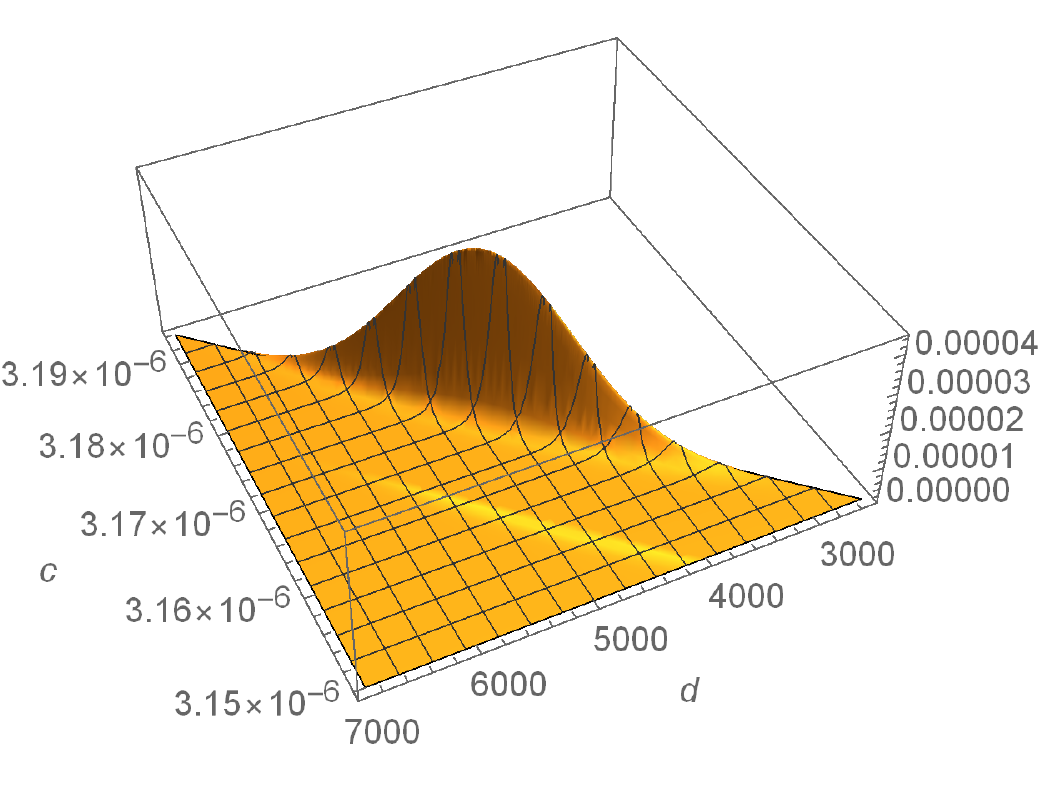}
\caption{ The probability of photon measurement for the complete graph of 5000 vertices. Here, $c$ is the rescaling parameter for the adjacency matrix and $d$ is the scaling factor for subtracting the identity matrix from the adjacency matrix (see also Appendix \ref{app:SubgraphSampling}). These rescalings correspond to the same graph in terms of perfect matchings but make the probability of detection
more favorable. The vertical axis shows the probability. }
\label{fig:K5000}
\end{figure}

\section{Conclusion}\label{sec:conclusion}

In this work, we have shown how to use a quantum device of squeezed states, optical interferometer, and photon counting measurements to estimate an important, and classically hard to compute, property of arbitrary graphs.
Based on the relation between Gaussian Boson Sampling and graph theory, we developed  a systematic and programmable method to map an arbitrary adjacency matrix of an undirected graph
to a continuous-variable Gaussian state. The number of perfect matchings of the graph can be directly estimated by sampling the photon number distribution of the
Gaussian state. The resources (input states and circuit) required to produce the desired Gaussian state are completely determined by the adjacency matrix and its
variants. 
Although the sampling probability decreases exponentially as the size of the graph increases, we have shown strategies to boost it to reasonably high values
by a suitable rescaling of the adjacency matrix and by linearly increasing the size of the circuit.

Estimating the number of perfect matchings of a large general graph is difficult for a classical computer. As a consequence of the problem of counting the perfect matchings being in the complexity class \#P, we expect the time to reach an exact solution to be exponential in the size of the graph on any physical device. 
We note that the computational hardness of Boson Sampling was conjectured via an embedding of Gaussian random matrices into larger random unitary ensembles \cite{aaronson2011computational}.
However, as hafnians are harder to compute than permanents, we conjecture that there exist graphs that correspond to important applications such as logistics, social networks, or chemistry and that show a quantum speedup. 
While exponential speedups could be possible for certain graphs, we expect that a special purpose quantum device offers a direct route to estimate the number of perfect matchings with better practical performance than a general purpose classical computer. 
Thus, the method we have developed may be used to achieve such a task faster and with lower energy consumption using a relatively small photonic device instead of a massive supercomputer.
Hence, our method has the potential to achieve practical quantum supremacy \cite{harrow2017quantum} for a problem with a wide range of applications.

\section*{Acknowledgements}

We thank Juan Miguel Arrazola and Nicol\'as Quesada for fruitful discussions.

\appendix
\vspace{0.5cm}

\section{Map an adjacency matrix to a covariance matrix}\label{sec:GraphToCM}

The hafnian  of a matrix is defined as following:
\begin{defi}\label{def:haf}
  Let $A=(a_{ij})$ be a $2M\times2M$ matrix and let $\varsigma$ denote a partition of the set $(1,\dots,2M)$ into unordered disjoint pairs.  There are $(2M-1)!!$ different such partitions. Then
  \begin{equation}\label{eq:shaf}
  \haf{A}=\sum_\varsigma\prod_{i=1}^{M}a_{\varsigma{(2i-1)}\varsigma{(2i)}}
  \end{equation}
  is the short hafnian.
\end{defi}
\begin{rem}
  The nomenclature comes from~\cite{schultz1992topological}. 
  The definition coincide for symmetric $0$\,-$1$ matrices and the prescription from~\cite{hamilton2016gaussian} yields the short hafnian.
\end{rem}
Following~\cite{hamilton2016gaussian}, the hafnian  can be obtained by sampling the output probability distribution of photons coming out from the multiport whose input is a Gaussian squeezed state. The relation between the covariance matrix $\s_A$ and the sampling matrix $A$ is given by Eq. (\ref{eq:sigmaA-1}). Here the covariance matrix is defined in the operator basis $\boldsymbol{\xi} = (a_1, \dots, a_M, a_1^{\dag}, \dots, a_M^{\dag})$,
\begin{equation}
\sigma_{A}(i,j) \equiv \frac{1}{2} \big\langle \big\{ \xi_i, \xi_j^{\dag} \big\} \big\rangle,
\end{equation}
where $\{\}$ stands for the anticommutator and we assume no displacements. The basis $\boldsymbol{\xi}$ will be called the Heisenberg basis in this paper.
Another operator basis used here will be called quadrature basis defined as $\tilde{\boldsymbol{\xi}}=(x_1, p_1,\dots, x_M, p_M)$.  The two bases are related by~\cite{weedbrook2012gaussian}
\begin{subequations}
\begin{align}
      a_j & ={1\over\sqrt{2}}(x_j+ip_j), \\
      a_j^{\dag} & ={1\over\sqrt{2}}(x_j-ip_j).
\end{align}
\end{subequations}
In the quadrature basis, the covariance matrix $\tilde{\s}_A$ is defined as
\begin{equation}
\tilde{\sigma}_{A}(i,j) \equiv \frac{1}{2} \big\langle \big\{ \tilde{\xi}_i, \tilde{\xi}_j^{\,\dag} \big\} \big\rangle.
\end{equation}

A symplectic operation in both the quadrature and Heisenberg basis preserves the symplectic form \cite{weedbrook2012gaussian}
\begin{equation}\label{eq:OmegaXPXP}
    \Omega=\bigoplus_{j=1}^{M}\om_j\equiv\bigoplus_{j=1}^{M}\begin{bmatrix}
                        0 & 1 \\
                        -1 &  0\\
                      \end{bmatrix}.
\end{equation}

Our particular interest is in developing a procedure to calculate the hafnian of an arbitrary adjacency matrix $A$ by mapping it to a valid covariance matrix and then sampling the photon number distribution. However, unless by a chance, a general $A$ does not map to a physical Gaussian covariance matrix by~\eqref{eq:sigmaA} since a covariance matrix in the Heisenberg basis has to be Hermitian (it is real if the adjacency matrix is real), positive-definite and satisfies the uncertainty principle \cite{simon1994quantum}, namely,
\begin{equation}\label{eq:UncertaintyPrinciple}
\sigma_A + \frac{1}{2} Z_{2M} \ge 0,
\end{equation}
where $Z_{2M}$ is defined in the Heisenberg basis as
\begin{equation}
Z_{2M} =
\begin{bmatrix}
  \bbI_M & 0 \\
  0 & - \bbI_M
\end{bmatrix}.
\end{equation}

We show in the following what conditions $A$ has to be satisfied to
guarantee a valid covariance matrix.
\begin{lem}\label{lem:XAsymmetric}
   Let the block form of a $2M\times 2M$ real symmetric matrix $A$  be
   \begin{equation}\label{eq:Ablockdiag}
     A=\begin{bmatrix}
         A_{11} & A_{12} \\
         A_{21} & A_{22} \\
       \end{bmatrix}.
   \end{equation}
   Then $\bbI_{2M}-X_{2M}A$ is symmetric if the square submatrices $A_{ij}$ of dimension $M$ satisfy $A_{12}=A_{21}$ and $A_{22}=A_{11}^\top$.
\end{lem}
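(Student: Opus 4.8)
The plan is to establish the symmetry of $\bbI_{2M}-X_{2M}A$ by a direct block computation that reduces the claim to three elementary conditions on the $M\times M$ blocks. First I would left-multiply $A$ by $X_{2M}$, whose only effect is to interchange the two block rows, obtaining
\begin{equation}
\bbI_{2M}-X_{2M}A=\begin{bmatrix} \bbI_M-A_{21} & -A_{22} \\ -A_{11} & \bbI_M-A_{12} \end{bmatrix}.
\end{equation}

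Next I would invoke the standard criterion that a two-by-two block matrix with blocks $P,Q,R,S$ (read row by row) is symmetric exactly when $P=P^\top$, $S=S^\top$, and $R=Q^\top$. Reading this off from the matrix above gives the three requirements $A_{21}=A_{21}^\top$, $A_{12}=A_{12}^\top$, and $A_{11}=A_{22}^\top$.

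The heart of the argument is to show that these three requirements follow from the hypotheses together with the assumed symmetry of $A$. Symmetry of $A$ supplies $A_{11}=A_{11}^\top$, $A_{22}=A_{22}^\top$, and the cross relation $A_{21}=A_{12}^\top$. The hypothesis $A_{22}=A_{11}^\top$ is, after transposing, exactly the off-diagonal requirement $A_{11}=A_{22}^\top$. For the two diagonal requirements, I would combine the hypothesis $A_{12}=A_{21}$ with the cross relation $A_{21}=A_{12}^\top$ to conclude $A_{12}=A_{12}^\top$, whence also $A_{21}=A_{21}^\top$. All three conditions are thereby met, and the block matrix is symmetric.

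I do not anticipate a substantive obstacle: the computation is routine and the only place demanding care is the bookkeeping after the row swap induced by $X_{2M}$, in particular transposing the off-diagonal blocks correctly. It is worth noting that, given the symmetry of $A$, the two stated conditions are not merely sufficient but coincide precisely with the two nontrivial symmetry constraints, so the hypotheses are sharp.
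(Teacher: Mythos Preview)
Your proof is correct and follows essentially the same approach as the paper: a direct block computation of $X_{2M}A$ (the paper) or $\bbI_{2M}-X_{2M}A$ (you), followed by reading off the symmetry conditions on the blocks and verifying them from the hypotheses together with the symmetry of $A$. The only cosmetic difference is that the paper first checks $X_{2M}A$ is symmetric and then notes that the difference of two symmetric matrices is symmetric, whereas you work with the full matrix $\bbI_{2M}-X_{2M}A$ from the outset.
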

   \begin{proof}
     By imposing symmetry on
     $$
     X_{2M}A=\begin{bmatrix}
         A_{21} & A_{22} \\
         A_{11} & A_{12} \\
       \end{bmatrix}
     $$
     we immediately find the submatrix equalities. The difference of two symmetric matrices is symmetric.
   \end{proof}
\begin{cor}\label{cor:XAcommute}
  Let $A$ be symmetric and satisfy the conditions of Lemma \ref{lem:XAsymmetric}. 
  Then $[X_{2M},A]=0$.
\end{cor}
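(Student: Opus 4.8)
The plan is to establish $X_{2M}A = A X_{2M}$ by a direct block computation, after first distilling the hypotheses into cleaner submatrix identities. First I would combine the symmetry of $A$ with the two conditions supplied by Lemma~\ref{lem:XAsymmetric}. Symmetry of $A$ forces $A_{11}$ and $A_{22}$ to be symmetric and $A_{21}=A_{12}^\top$, while the lemma gives $A_{12}=A_{21}$ and $A_{22}=A_{11}^\top$. Chaining these, $A_{12}=A_{21}=A_{12}^\top$ so that $A_{12}$ is itself symmetric, and $A_{22}=A_{11}^\top=A_{11}$ so that the diagonal blocks coincide. The upshot is the pair of identities $A_{11}=A_{22}$ and $A_{12}=A_{21}$, which are precisely the structural conditions highlighted in Sec.~\ref{sec:SampleMixState}.

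Next I would compute the two products and compare them block by block. The left product was already obtained in the proof of Lemma~\ref{lem:XAsymmetric}, namely $X_{2M}A=\begin{bmatrix} A_{21} & A_{22} \\ A_{11} & A_{12} \end{bmatrix}$; substituting $A_{21}=A_{12}$ and $A_{22}=A_{11}$ rewrites it as $\begin{bmatrix} A_{12} & A_{11} \\ A_{11} & A_{12} \end{bmatrix}$. For the right product, multiplying $A$ by $X_{2M}$ on the right swaps the two block-columns, giving $A X_{2M}=\begin{bmatrix} A_{12} & A_{11} \\ A_{22} & A_{21} \end{bmatrix}$, and the same two substitutions reduce this to the identical matrix. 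Equating the two block forms yields $X_{2M}A=A X_{2M}$, i.e.\ $[X_{2M},A]=0$.

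Honestly there is no serious obstacle here: once the hypotheses are reduced to $A_{11}=A_{22}$ and $A_{12}=A_{21}$, the commutation is a one-line block multiplication. The only place to be careful is the first step---not conflating the lemma's transpose condition $A_{22}=A_{11}^\top$ with plain equality before invoking the symmetry of $A_{11}$---and keeping track that $X_{2M}$ acts by swapping block-rows when multiplied on the left and block-columns when multiplied on the right.
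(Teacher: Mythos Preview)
Your argument is correct, but it takes a more computational route than the paper. You reduce the hypotheses to $A_{11}=A_{22}$ and $A_{12}=A_{21}$ and then verify $X_{2M}A=AX_{2M}$ by an explicit block comparison. The paper instead uses a one-line abstract argument: since $X_{2M}$ and $A$ are both symmetric and, by Lemma~\ref{lem:XAsymmetric}, their product $X_{2M}A$ is also symmetric, the general fact that a product of two symmetric matrices is symmetric only if they commute gives $[X_{2M},A]=0$ immediately (indeed $(X_{2M}A)^\top=A^\top X_{2M}^\top=AX_{2M}$, so symmetry of $X_{2M}A$ forces $X_{2M}A=AX_{2M}$). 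Your approach has the virtue of making the block structure explicit and surfacing the simplified identities $A_{11}=A_{22}$, $A_{12}=A_{21}$ that are used downstream; the paper's approach is shorter and highlights that the result is really just a corollary of the symmetry of $X_{2M}A$ established in the lemma, with no further computation needed.
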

\begin{proof}
  As $X_{2M}A$ is symmetric as in Lemma \ref{lem:XAsymmetric}, note that if a product of two symmetric matrices is symmetric then they commute.
\end{proof}

\begin{thm}\label{thm:XAcond}
    Let $A$ be a $2M\times2M$ real symmetric matrix and assume that $X_{2M}A$ is also symmetric. Then $\s_A>0$ iff
    the eigenvalues $\la_k$ of $A$ satisfy $|\la_k| < 1, \forall k$.
\end{thm}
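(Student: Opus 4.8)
The plan is to turn the positive-definiteness of $\s_A$ into a family of elementary scalar inequalities by diagonalizing $X_{2M}$ and $A$ in a common orthonormal basis. The hypothesis that $X_{2M}A$ is symmetric is exactly the condition that, via Corollary~\ref{cor:XAcommute}, yields $[X_{2M},A]=0$. Since $X_{2M}$ and $A$ are then real symmetric matrices that commute, the spectral theorem furnishes a single orthogonal $O$ with $O^\top X_{2M}O=\diag(x_1,\dots,x_{2M})$ and $O^\top A\,O=\diag(\la_1,\dots,\la_{2M})$, where each $x_k\in\{+1,-1\}$ (because $X_{2M}^2=\bbI_{2M}$) and the $\la_k$ are precisely the eigenvalues of $A$, counted with multiplicity.

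First I would conjugate the defining relation $\s_A=(\bbI_{2M}-X_{2M}A)^{-1}-\bbI_{2M}/2$ by $O$. Since conjugation commutes with inversion and $O^\top X_{2M}A\,O=\diag(x_k\la_k)$, this produces the diagonal form
\[
O^\top\s_A\,O=\diag\!\Big(\frac{1}{1-x_k\la_k}-\frac12\Big)=\diag\!\Big(\frac{1+x_k\la_k}{2(1-x_k\la_k)}\Big).
\]
Because orthogonal conjugation preserves both symmetry and the spectrum, $\s_A>0$ holds if and only if every one of these diagonal entries is strictly positive.

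The next step is purely scalar: for $x_k=\pm1$, the quantity $\frac{1+x_k\la_k}{1-x_k\la_k}$ is positive precisely when $-1<x_k\la_k<1$, and since $|x_k|=1$ this is equivalent to $|\la_k|<1$ irrespective of the sign $x_k$. Forming the conjunction over all $k$ --- and recalling that $\{\la_1,\dots,\la_{2M}\}$ is the full spectrum of $A$ --- gives the stated equivalence $\s_A>0 \Leftrightarrow |\la_k|<1$ for every eigenvalue.

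The point requiring care, which I regard as the main obstacle, is the well-definedness of $\s_A$ in the ``only if'' direction: one must not tacitly assume that $\bbI_{2M}-X_{2M}A$ is invertible. I would settle this by a case analysis in the common eigenbasis. If some $|\la_k|\ge1$, then either $x_k\la_k=1$, so that $1-x_k\la_k=0$ and $\s_A$ does not exist, or $x_k\la_k\neq1$ and the corresponding diagonal entry $\frac{1+x_k\la_k}{2(1-x_k\la_k)}$ is $\le0$; in either situation $\s_A>0$ is impossible, which proves the contrapositive of ``only if''. Conversely, $|\la_k|<1$ forces every denominator $1-x_k\la_k$ into the interval $(0,2)$, so the inverse exists and each diagonal entry is strictly positive, giving $\s_A>0$ and completing the equivalence.
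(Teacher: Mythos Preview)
Your proposal is correct and follows essentially the same route as the paper: use Corollary~\ref{cor:XAcommute} to simultaneously diagonalize $X_{2M}$ and $A$, obtain the eigenvalues of $\s_A$ as $\tfrac{1}{2}\frac{1+x_k\la_k}{1-x_k\la_k}$, and reduce positive-definiteness to the scalar inequalities $|\la_k|<1$. Your extra case analysis handling the possible non-invertibility of $\bbI_{2M}-X_{2M}A$ is a point the paper's proof leaves implicit, so your treatment is in fact slightly more careful.
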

\begin{rem}
  One can show that $(\bbI_{2M}-X_{2M}A)^{-1}$ is symmetric if $\bbI_{2M}-X_{2M}A$ is, and therefore $\sigma_A$ is symmetric (and Hermitian) from \eqref{eq:sigmaA-1}.
  In Lemma~\ref{lem:XAsymmetric} we show which $A$'s it holds for, but it is still a rather special condition on an adjacency matrix $A$. In the following
  we will show how to overcome these limitations.
\end{rem}
\begin{proof}
  By writing the second identity matrix in \eqref{eq:sigmaA-1} as
  \begin{align*}
  \bbI_{2M} &= (\bbI_{2M}-X_{2M}A)(\bbI_{2M}-X_{2M}A)^{-1} \nonumber\\
  &=
  (\bbI_{2M}-X_{2M}A)^{-1} (\bbI_{2M}-X_{2M}A),
  \end{align*}
  we can express $\s_A$ as
    \begin{align}\label{eq:sigmaAdifferent}
      \sigma_A &= {1\over2}(\bbI_{2M}+X_{2M}A)(\bbI_{2M}-X_{2M}A)^{-1} \nonumber\\
      &=
      {1\over2}(\bbI_{2M}-X_{2M}A)^{-1}(\bbI_{2M}+X_{2M}A).
    \end{align}
    It is known~\cite{horn2012matrix} that two matrices are simultaneously diagonalizable iff they commute. From Corollary \ref{cor:XAcommute}, $X_{2M}$ and $A$ commute, therefore they can be diagonized simultaneously by an orthogonal matrix and the eigenvalues of $X_{2M}A$, labeled as $\la^{X}_{k}$,  are the products of eigenvalues of  $X_{2M}$ and $A$. Since the eigenvalues of $X_{2M}$ are either $+1$ or $-1$, the eigenvalues of $X_{2M}A$ satisfy $|\lambda^{X}_{k}| = |\lambda_k|$.

From \eqref{eq:sigmaAdifferent}, the orthogonal matrix that diagonalizes  $X_{2M}A$ also diagonalizes $\sigma_A$, giving eigenvalues
    \begin{equation}
    \lambda^{\sigma}_{k} = \frac{1}{2} \bigg( \frac{1+\lambda^{X}_{k}}{1-\lambda^{X}_{k}} \bigg).
    \end{equation}
    Therefore $\lambda^{\sigma}_{k}>0$ iff $|\lambda^{X}_{k}|<1$, namely, $\sigma_A>0$ iff $|\lambda_{k}|<1, \forall k$.
\end{proof}
\begin{defi}
   A matrix $A=(a_{ij})$ is nonnegative if $\forall i,j\  a_{ij}\geq0$.
\end{defi}
\begin{thm}[Perron-Frobenius,~\cite{horn2012matrix}]\label{thm:PerFrob}
  Let $A$ be nonnegative and irreducible and let $\la_1$  be the largest eigenvalue of $A$. Then $\la_1$ is positive and $\la_1\geq|\la_j|,\forall j$.
\end{thm}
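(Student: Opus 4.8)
The plan is to establish both assertions at once via the Collatz--Wielandt variational characterization of the largest eigenvalue. Taking $A$ to be $n\times n$, for a nonnegative nonzero vector $x\in\bbR^n$ set
\[
r(x)=\min_{i:\,x_i>0}\frac{(Ax)_i}{x_i},
\]
the largest scalar $t$ for which $Ax\ge tx$ holds componentwise. Since $r$ is invariant under positive rescaling of $x$, I would restrict to the compact simplex $\Delta=\{x\ge 0:\sum_i x_i=1\}$, define $r^\ast=\sup_{x\in\Delta}r(x)$, and then prove in turn that $r^\ast>0$, that $r^\ast$ is attained at a strictly positive eigenvector, and finally that $r^\ast=\la_1$ dominates every eigenvalue in modulus.

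The single ingredient I would extract from irreducibility is that $N:=(\bbI+A)^{n-1}$ is \emph{strictly} positive entrywise: irreducibility says the directed graph of $A$ is strongly connected, so any vertex reaches any other in at most $n-1$ steps, forcing a positive entry in the corresponding position of $N$. Note also that $N$ commutes with $A$. Positivity of $r^\ast$ is then immediate: evaluating at the all-ones vector gives $r(\mathbf{1})=\min_i\sum_j a_{ij}>0$, since no row of an irreducible matrix is identically zero, whence $r^\ast\ge r(\mathbf{1})>0$.

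The delicate step, which I expect to be the main obstacle, is attainment together with the eigenvector property, because $r$ need not be continuous where components of $x$ vanish, so bare compactness of $\Delta$ does not suffice. I would get around this by composing with $N$: multiplying $Ax\ge r(x)x$ by the nonnegative matrix $N$ and using $NA=AN$ gives $A(Nx)\ge r(x)(Nx)$, hence $r(Nx)\ge r(x)$ for every $x$. Therefore the function $x\mapsto r(Nx)$, which is continuous because $Nx$ lies in the open positive orthant where $r$ is continuous, attains its maximum $r^\ast$ over the compact simplex $\Delta$ at some $x_0$; put $z:=Nx_0>0$, so that $r(z)=r^\ast$. To see that $z$ is a genuine eigenvector, suppose $w:=Az-r^\ast z\ge 0$ were nonzero; then $Nw>0$, i.e.\ $A(Nz)>r^\ast(Nz)$ strictly in every component, giving $r(Nz)>r^\ast$ and contradicting maximality. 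Hence $Az=r^\ast z$ with $z>0$, consistently with $z=(1+r^\ast)^{-(n-1)}Nz$ being strictly positive.

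Domination then follows cheaply. For any eigenpair $Ay=\mu y$ with $y\ne 0$ and $\mu$ possibly complex, taking componentwise moduli and using $A\ge 0$ together with the triangle inequality yields $A|y|\ge|\mu|\,|y|$, so that $r(|y|)\ge|\mu|$ and therefore $|\mu|\le r^\ast$. Consequently $r^\ast$ is a real positive eigenvalue no smaller than the modulus of any eigenvalue, so it is exactly the largest eigenvalue $\la_1$; this delivers both $\la_1>0$ and $\la_1\ge|\la_j|$ for all $j$ simultaneously.
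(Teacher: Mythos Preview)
Your argument is a correct and self-contained proof of the Perron--Frobenius theorem via the Collatz--Wielandt characterization; the use of $N=(\bbI+A)^{n-1}$ to push maximizing sequences into the open positive orthant and thereby recover continuity of $r$ is the standard device, and your verification that the maximizer is a genuine eigenvector and that $r^\ast$ dominates all eigenvalue moduli is clean.

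However, there is nothing to compare against: the paper does not prove this theorem at all. It is stated as Theorem~\ref{thm:PerFrob} with a citation to Horn and Johnson~\cite{horn2012matrix} and is used only as a black box in the subsequent Proposition~\ref{prop:Aspectrum} to guarantee that a nonnegative irreducible adjacency matrix has a positive dominant eigenvalue, so that the rescaling $0<c<1/\la_1$ makes sense. Your proof therefore supplies strictly more than the paper does, and is entirely appropriate as a replacement for the bare citation if one wanted the exposition to be self-contained. One minor caveat: your positivity step $r(\mathbf{1})=\min_i\sum_j a_{ij}>0$ tacitly assumes $n\ge 2$ (the $1\times 1$ zero matrix is vacuously irreducible but has spectral radius zero); this is harmless in the paper's context, where adjacency matrices of graphs with at least one edge are the only objects of interest.
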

\begin{prop}\label{prop:Aspectrum}
  Let submatrices $A_{11}$ and $A_{12}$ in~\eqref{eq:Ablockdiag} be symmetric and let $\la_1$  be the largest eigenvalue of $A$.
  Then $\sigma_{cA} > 0$ for $0<c<1/\la_1$.
\end{prop}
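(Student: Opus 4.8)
The plan is to reduce the statement to Theorem~\ref{thm:XAcond} by a rescaling argument and then fix the relevant spectral bound using Perron--Frobenius. First I would check that the rescaled matrix $cA$ lies within the scope of Theorem~\ref{thm:XAcond}, i.e.\ that $X_{2M}(cA)$ is symmetric. Since $A$ is real symmetric and $A_{12}$ is symmetric, one has $A_{21}=A_{12}^\top=A_{12}$, and together with the block equality $A_{22}=A_{11}$ of the standing setup the conditions of Lemma~\ref{lem:XAsymmetric} (equivalently Corollary~\ref{cor:XAcommute}) are met, so $X_{2M}A$ is symmetric. Because $X_{2M}(cA)=c\,X_{2M}A$, symmetry is inherited by the rescaled matrix, and Theorem~\ref{thm:XAcond} applies to $cA$ verbatim.

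Next I would apply Theorem~\ref{thm:XAcond} with $cA$ in place of $A$. The eigenvalues of $cA$ are exactly $c\la_k$, where $\la_k$ are the eigenvalues of $A$, so the theorem gives $\s_{cA}>0$ if and only if $|c\la_k|<1$ for every $k$, equivalently $c<1/|\la_k|$ for all $k$, i.e.\ $c<1/\max_k|\la_k|$. At this point the whole problem has collapsed to identifying $\max_k|\la_k|$ with the largest eigenvalue $\la_1$.

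This is precisely where Theorem~\ref{thm:PerFrob} enters. The adjacency matrix $A$ is nonnegative, and for a connected graph it is irreducible, so Perron--Frobenius yields $\la_1>0$ together with $\la_1\ge|\la_j|$ for all $j$; hence $\max_k|\la_k|=\la_1$ and the threshold $1/\la_1$ is both positive and well defined. Assembling the pieces, for any $c$ with $0<c<1/\la_1$ one has $c|\la_k|\le c\la_1<1$ for every $k$, so $|c\la_k|<1$ for all $k$, and the characterization of the previous paragraph delivers $\s_{cA}>0$.

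The genuinely load-bearing step is the appeal to Perron--Frobenius: without it a negative eigenvalue of modulus exceeding $\la_1$ could destroy positivity even when $c<1/\la_1$, so the fact that the \emph{largest} eigenvalue also has the largest absolute value is exactly what makes $1/\la_1$ the correct cutoff. I would therefore flag irreducibility (connectedness of the graph) as the one hypothesis deserving care; everything else is a mechanical rescaling of Theorem~\ref{thm:XAcond}.
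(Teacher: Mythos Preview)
Your proposal is correct and follows essentially the same route as the paper: rescale, invoke Theorem~\ref{thm:XAcond} for $cA$, and use Perron--Frobenius (Theorem~\ref{thm:PerFrob}) to identify $\la_1$ with the spectral radius. Your version is in fact more careful than the paper's, since you explicitly verify that $X_{2M}(cA)$ is symmetric before applying Theorem~\ref{thm:XAcond}, and you rightly flag irreducibility as an implicit hypothesis that the statement of the proposition does not spell out.
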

\begin{proof}
From Theorem \ref{thm:PerFrob} and the assumption $0<c<1/\la_1$, the maximum eigenvalue of $cA$ is positive and smaller than one, and eigenvalues $\la'_k = c|\la_k|$ of $cA$ satisfy $0<\la'_k < 1, \forall k$. According to
Theorem \ref{thm:XAcond}, we have $\sigma_{cA} >0$.
\end{proof}
Let $A$ be the adjacency matrix of an undirected graph~\cite{west2001introduction}. Then, in general, we do not get a valid covariance matrix from~Eq.~\eqref{eq:sigmaA-1}. If $X_{2M}A$ is symmetric we can use Proposition~\ref{prop:Aspectrum} and rescale $A\mapsto cA$ such that $\s_{cA}>0$. Although a Hermitian and positive definite $\s_{cA}$ does not guarantee a valid covariance matrix, we will see in the following that $\s_{cA}$ does represent a covariance matrix for some of the adjacency matrices $A$.

\begin{lem}\label{lem:validCM}
  Let $A_{11}$ and $A_{12}$ commute, then $\s_{cA}$ is a valid covariance matrix if $A_{12}\ge 0$.
\end{lem}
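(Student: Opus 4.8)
The plan is to verify directly the three defining properties of a valid covariance matrix for $\s_{cA}$: Hermiticity, positive definiteness, and the uncertainty principle \eqref{eq:UncertaintyPrinciple}. The first two are already in hand. Hermiticity (here, real symmetry) follows from Lemma~\ref{lem:XAsymmetric} and the subsequent remark; moreover, $A$ real symmetric together with $X_{2M}A$ symmetric forces $A_{12}=A_{21}=A_{12}^\top$ and $A_{22}=A_{11}^\top=A_{11}$, so both $A_{11}$ and $A_{12}$ are symmetric, and positive definiteness of $\s_{cA}$ on the range $0<c<1/\la_1$ is exactly Proposition~\ref{prop:Aspectrum}. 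Hence the real content of the lemma is the uncertainty principle, and this is where the hypothesis $A_{12}\ge0$ must enter.

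The key simplification is simultaneous diagonalization. Since $A_{11}$ and $A_{12}$ are symmetric and commute, there is an orthogonal $O$ with $O^\top A_{11}O=\diag(f_k)$ and $O^\top A_{12}O=\diag(h_k)$. I would extend this to the block-orthogonal matrix $W=O\oplus O$ and observe that $W$ preserves the relevant structure: $W^\top X_{2M}W=X_{2M}$ and $W^\top Z_{2M}W=Z_{2M}$, while $W^\top(cA)W=\left(\begin{smallmatrix}\diag(cf_k)&\diag(ch_k)\\\diag(ch_k)&\diag(cf_k)\end{smallmatrix}\right)$. After a mode-permutation pairing the $k$-th index of the two blocks, the problem decouples into $M$ independent $2\times2$ channels, the $k$-th governed by $A^{(k)}=\left(\begin{smallmatrix}cf_k&ch_k\\ch_k&cf_k\end{smallmatrix}\right)$ together with $x=\left(\begin{smallmatrix}0&1\\1&0\end{smallmatrix}\right)$ and $z=\left(\begin{smallmatrix}1&0\\0&-1\end{smallmatrix}\right)$. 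Because $W$ and the permutation are orthogonal and fix $Z_{2M}$, positivity of $\s_{cA}+\tfrac12 Z_{2M}$ is equivalent to positivity of each $2\times2$ block $\s^{(k)}+\tfrac12 z$ separately.

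Then I would carry out the single-block computation. Writing $a=cf_k$, $b=ch_k$ and using $\s^{(k)}=\tfrac12(\bbI_2+xA^{(k)})(\bbI_2-xA^{(k)})^{-1}$ from \eqref{eq:sigmaAdifferent}, a direct inversion gives $\s^{(k)}=\frac{1}{2\D_k}\left(\begin{smallmatrix}1+a^2-b^2&2a\\2a&1+a^2-b^2\end{smallmatrix}\right)$ with $\D_k=(1-b)^2-a^2$. The positivity range $0<c<1/\la_1$ guarantees $|b\pm a|<1$, hence $\D_k>0$ and $1-b>0$, so the leading entry of $\s^{(k)}+\tfrac12 z$ equals $(1-b)/\D_k>0$. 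The decisive quantity is the determinant; a $2\times2$ symmetric matrix with positive leading entry is positive semidefinite iff its determinant is nonnegative. I expect this determinant to collapse to $\det\big(\s^{(k)}+\tfrac12 z\big)=ch_k/\D_k$, so each block is positive semidefinite precisely when $h_k\ge0$. Since $A_{12}\ge0$ is equivalent to $h_k\ge0$ for all $k$, this proves $\s_{cA}+\tfrac12 Z_{2M}\ge0$ and the lemma follows.

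I expect the determinant evaluation to be the one place demanding care: the expression $\big((1+a^2-b^2)^2-4a^2\big)-\D_k^2$ looks unwieldy until one recognizes that both $(1+a^2-b^2)^2-4a^2$ and $\D_k^2$ factor over the four quantities $1\pm a\pm b$, whereupon the common factor $(1-a-b)(1+a-b)=\D_k$ cancels and leaves $4b\D_k$, i.e.\ the clean ratio $b/\D_k=ch_k/\D_k$. Everything else is either established earlier in the paper or routine once the problem has been block-diagonalized by $W$.
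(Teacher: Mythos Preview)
Your proof is correct and follows essentially the same approach as the paper: simultaneous diagonalization of $A_{11}$ and $A_{12}$ by an orthogonal $O$, conjugation by $O\oplus O$ to reduce $\s_{cA}$ to decoupled $2\times2$ blocks, and then verifying the uncertainty principle blockwise. The only cosmetic difference is that the paper phrases the last step via the symplectic eigenvalues $\nu_k=\tfrac12\sqrt{\dfrac{(1+ch_k)^2-c^2f_k^2}{(1-ch_k)^2-c^2f_k^2}}\geq\tfrac12$, whereas you verify $\s^{(k)}+\tfrac12 z\geq0$ directly through the determinant $ch_k/\D_k$; both reduce to $h_k\geq0$.
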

\begin{proof}
  If $A_{11}$ and $A_{12}$ commute, there exists an orthogonal matrix $S$ that diagonalizes $A_{11}$ and $A_{12}$ simultaneously,
    \begin{subequations}
    \begin{align}
       S A_{11} S^\top &= \diag{[f_1, f_2, \dots, f_M]}, \\
       S A_{12} S^\top &= \diag{[ h_1, h_2, \dots, h_M]},
    \end{align}
    \end{subequations}
  where $f_k$ and $h_k$ are the eigenvalues of $A_{11}$ and $A_{12}$, respectively. Then we find
  \begin{equation}
  \left( S\oplus S \right) \,A \,\big( S^\top \oplus S^\top \big) =
      \begingroup
    \renewcommand{\arraystretch}{1.2}
    \begin{bmatrix}
      \bigoplus_{k=1}^{M} f_k &  \bigoplus_{k=1}^{M} h_k \\
      \bigoplus_{k=1}^{M} h_k &  \bigoplus_{k=1}^{M} f_k
    \end{bmatrix}
    \endgroup
  \end{equation}
  and it is obvious that $f_k \pm h_k$ are eigenvalues of $A$.
  From \eqref{eq:sigmaA}, the transformation of $\s_{cA}$ is
  \begin{align}
  &\left( S\oplus S \right) \,\s_{cA} \,\left( S^\top \oplus S^\top  \right) \nonumber\\
  &=
  \begingroup
    \renewcommand{\arraystretch}{1.5}
    \begin{bmatrix}
      \frac{1}{2} \bigoplus_{k=1}^{M} \frac{1-c^2 h_k^2 + c^2 f_k^2}{(1-c h_k)^2-c^2f_k^2} &  \bigoplus_{k=1}^{M} \frac{c f_k}{(1-c h_k)^2-c^2f_k^2} \\
      \bigoplus_{k=1}^{M} \frac{c f_k}{(1-c h_k)^2-c^2 f_k^2} &  \frac{1}{2} \bigoplus_{k=1}^{M} \frac{1-c^2 h_k^2 + c^2f_k^2}{(1-c h_k)^2-c^2 f_k^2}
    \end{bmatrix}.
  \endgroup \nonumber\\
  \end{align}
  The eigenvalues of $\s_{cA}$ can be calculated as
  \begin{align}\label{eq:SpecA}
  \spec{\s_{cA}} &=\frac{1}{2}  \bigg[ \frac{1+c( h_k + f_k) }{1- c( h_k + f_k)}, \,  \frac{1+c( h_k - f_k) }{1- c( h_k - f_k)} \bigg] \nonumber\\
  \end{align}
  for $\ k = 1, 2, \dots, M$.
  Note that $Z_{2M}$ is invariant,
  \begin{equation*}
    \left( S\oplus S \right) \, Z_{2M} \,\big( S^\top  \oplus S^\top  \big) = Z_{2M}.
  \end{equation*}
  Therefore we have
   \begin{align}
     &\left( S\oplus S \right) \, \big(Z_{2M} \s_{cA}\big) \,\big( S^\top  \oplus S^\top  \big) \nonumber\\
     &=Z_{2M} \left( S\oplus S \right) \,\s_{cA} \,\big( S^\top  \oplus S^\top  \big) \nonumber \\
     &= \begin{bmatrix}
      \frac{1}{2} \bigoplus_{k=1}^{M} \frac{1-c^2 h_k^2 + c^2 f_k^2}{(1-c h_k)^2-c^2f_k^2} &  \bigoplus_{k=1}^{M} \frac{c f_k}{(1-c h_k)^2-c^2f_k^2} \\
      -\bigoplus_{k=1}^{M} \frac{c f_k}{(1-c h_k)^2-c^2 f_k^2} &  -\frac{1}{2} \bigoplus_{k=1}^{M} \frac{1-c^2 h_k^2 + c^2f_k^2}{(1-c h_k)^2-c^2 f_k^2}
    \end{bmatrix}. \nonumber\\
  \end{align}
  We then can find the symplectic eigenvalues of $\s_{cA}$ (the eigenvalues of $|Z_{2M} \s_{cA}|$) as
  \begin{equation}
  \nu_k = \frac{1}{2} \sqrt{\frac{(1+c h_k)^2 - c^2 f_k^2}{(1-c h_k)^2 - c^2 f_k^2}}.
  \end{equation}
  Hence  $h_k \geq 0$  in order to satisfy $\nu_k \ge 1/2$.
\end{proof}
\begin{rem}
  The covariance matrix is pure only when all eigenvalues of $A_{12}$ are zero, which means $A_{12}=0$ and $A$ is in a block diagonal form. In the case when some eigenvalues of $A_{12}$ are positive the Gaussian state is mixed.
\end{rem}

When $\s_{cA}$ is a valid covariance matrix, we recover $\haf{A}$ from $\haf{[cA]}$ via Lemma~\ref{lem:scaledHaf}.

\begin{lem}\label{lem:scaledHaf}
  Let $c\in\bbR$. Then  $\haf{[cA]}=c^M\haf{A}$ where $\dim{A}=2M$.
\end{lem}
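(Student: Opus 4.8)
The plan is to argue directly from the combinatorial definition of the hafnian in Definition~\ref{def:haf}, with no analysis required. First I would recall that $\haf{A}$ is a sum over all $(2M-1)!!$ pair partitions $\varsigma$ of the set $(1,\dots,2M)$, and that each summand is the product $\prod_{i=1}^{M} a_{\varsigma(2i-1)\varsigma(2i)}$ of exactly $M$ matrix entries, one for each of the $M$ unordered pairs. The crucial observation is that this number of factors, $M$, is the same for every partition $\varsigma$, since any partition of a $2M$-element set into disjoint pairs contains precisely $M$ pairs.

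Next I would substitute $cA$ for $A$. Writing the entries of the rescaled matrix as $(cA)_{ij}=c\,a_{ij}$, the summand indexed by $\varsigma$ becomes $\prod_{i=1}^{M} c\,a_{\varsigma(2i-1)\varsigma(2i)} = c^M \prod_{i=1}^{M} a_{\varsigma(2i-1)\varsigma(2i)}$, because the scalar $c$ is pulled out of each of the $M$ factors. Since the exponent $M$ is independent of $\varsigma$, the factor $c^M$ is common to all terms and can be pulled outside the sum, yielding $\haf{[cA]} = c^M \sum_\varsigma \prod_{i=1}^{M} a_{\varsigma(2i-1)\varsigma(2i)} = c^M \haf{A}$, as claimed.

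There is essentially no obstacle here; the single point worth stating carefully is that every summand contains exactly $M$ matrix entries, so the number of factors is a fixed constant equal to $\dim{A}/2$, independent of the partition $\varsigma$. This is what makes the hafnian homogeneous of degree $M$ in the entries of $A$, and hence explains why the scaling factor is $c^M$ rather than $c^{2M}$. Once this homogeneity observation is in place the argument reduces to a one-line factoring, so I would keep the written proof correspondingly short.
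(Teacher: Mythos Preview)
Your proposal is correct and matches the paper's approach exactly: the paper's proof consists of the single line ``Follows directly from~Eq.~\eqref{eq:shaf},'' and your argument is simply the explicit unpacking of that remark. The only difference is level of detail; there is no substantive divergence in method.
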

\begin{proof}
  Follows directly from~Eq.~\eqref{eq:shaf}.
\end{proof}
The analysis can be extended to nonnegative matrices. At first, this may be a surprising move as the adjacency matrix of an undirected graph has no negative entries. But as long as the hafnian of such a matrix coincides with the hafnian of the desired adjacency matrix (or can be use to find it in an efficient way) and the corresponding covariance matrix is more advantageous (that will be the case) we can consider it. We will discuss such modifications in Appendix \ref{app:scalability} but in general we will assume that the requirements on the modified $A$ in Lemma~\ref{lem:XAsymmetric}, Corollary~\ref{cor:XAcommute} and Theorem~\ref{thm:XAcond} are satisfied. But since we allowed $A$ to be negative, its eigenvalues may become negative and Proposition~\ref{prop:Aspectrum} must be strengthened. This is because it cannot rely on Perron-Frobenius theorem. But the modification is just cosmetic. As before, we get $0<c<1/\la_1$ in Proposition~\ref{prop:Aspectrum} but now $\la_1$ is the maximal absolute eigenvalue of $X_{2M}A$, which on the other hand, equals to the maximal absolute value of the eigenvalues of $A$. This is because the multiplication by $X_{2M}$ at most changes the signs of some of the eigenvalues of $A$.

Even if the covariance matrix is positive definite and Hermitian/symmetric, it may not be a valid Gaussian covariance matrix. Every covariance matrix has to satisfy the uncertainty inequality~\cite{simon1994quantum} which translates into the symplectic eigenvalues being greater than one half, see Eq.~\eqref{eq:UncertaintyPrinciple}. Remarkably, we can take care of this issue by doubling the dimension of the adjacency matrix and the corresponding covariance matrix turns out to be a pure state. As we will see, this procedure is a panacea of a sort: so far we considered only cases where $X_{2M}A$ is symmetric. If it is not the case, and this can routinely happen for legitimate adjacency matrices or its modification in the sense of the previous paragraph, the doubling prescription makes the covariance matrix pure as well. Therefore it works for all adjacency matrices. If $X_{2M}$ and $A$ do not commute then again $0<c<1/\la_1$ holds and $\la_1$ is the maximal absolute eigenvalue of $X_{2M}A$ but it may not be directly related to the eigenvalues of $A$.
\begin{lem}\label{lem:hafProduct}
  Let $(V_1,E_1)$ and $(V_2,E_2)$ denote graphs with adjacency matrices $A_1,A_2$. Then  $\haf{[A_1\oplus A_2]}=\haf{A_1}\haf{A_2}$.
\end{lem}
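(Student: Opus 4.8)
The plan is to argue directly from the combinatorial definition of the short hafnian in Eq.~\eqref{eq:shaf}. Write $A_1$ as a $2M_1\times 2M_1$ matrix and $A_2$ as a $2M_2\times 2M_2$ matrix, so that $A\equiv A_1\oplus A_2$ is $2M\times 2M$ with $M=M_1+M_2$, and label its rows and columns by the index set $\{1,\dots,2M_1\}$ (``block $1$'') and $\{2M_1+1,\dots,2M\}$ (``block $2$''). The single structural fact I will exploit is that the off-diagonal blocks of $A$ vanish: the entry $a_{ij}$ of $A$ equals $0$ whenever $i$ and $j$ lie in different blocks, while for $i,j$ both in block $2$ we have the relabelling $a_{ij}=(A_2)_{i-2M_1,\,j-2M_1}$.

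First I would inspect a single term in the sum over pair-partitions $\varsigma$ of $\{1,\dots,2M\}$. If $\varsigma$ contains even one pair $\{\varsigma(2i-1),\varsigma(2i)\}$ straddling the two blocks, the corresponding factor $a_{\varsigma(2i-1)\varsigma(2i)}$ vanishes, and hence so does the entire product $\prod_{i=1}^{M} a_{\varsigma(2i-1)\varsigma(2i)}$. Therefore only partitions all of whose pairs lie within a single block contribute to Eq.~\eqref{eq:shaf}. I would then set up the bijection between these surviving partitions and pairs $(\varsigma_1,\varsigma_2)$, with $\varsigma_1$ a pair-partition of block $1$ and $\varsigma_2$ a pair-partition of block $2$: a surviving $\varsigma$ restricts to a pair-partition of each block, and conversely any such pair recombines into a surviving $\varsigma$. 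The crucial consistency check is that both blocks have \emph{even} cardinality ($2M_1$ and $2M_2$), so that no pair is forced to straddle them and the restrictions are genuine pair-partitions. Under this correspondence the product factors, the block-$1$ factors reading entries of $A_1$ and the block-$2$ factors (after the shift by $2M_1$) reading entries of $A_2$.

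Summing over all surviving partitions and using distributivity then gives
\begin{equation*}
\haf{[A_1\oplus A_2]}
=\Big(\sum_{\varsigma_1}\prod_{i=1}^{M_1} a_{\varsigma_1(2i-1)\varsigma_1(2i)}\Big)
\Big(\sum_{\varsigma_2}\prod_{j=1}^{M_2} a_{\varsigma_2(2j-1)\varsigma_2(2j)}\Big)
=\haf{A_1}\,\haf{A_2},
\end{equation*}
which is the claim. I expect the only real obstacle to be the bookkeeping in the bijection step, namely making the index relabelling between block $2$ of $A$ and the rows/columns of $A_2$ fully explicit and confirming the evenness of each block so that the sum genuinely splits as a product of two independent sums; everything else follows immediately from the vanishing of the cross-block entries. (Graph-theoretically this is just the statement that a perfect matching of a disjoint union of graphs is precisely a choice of a perfect matching in each component, but the algebraic route above works verbatim for arbitrary symmetric $A_1,A_2$ and is the cleanest to certify.)
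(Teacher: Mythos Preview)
Your argument is correct. The paper's own proof is the one-line graph-theoretic version you mention parenthetically at the end: it observes that $A_1\oplus A_2$ is the adjacency matrix of the disjoint union $(V_1,E_1)\cup(V_2,E_2)$, and that a perfect matching of a disjoint union is exactly an independent choice of a perfect matching in each component, so the counts multiply.

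The difference is that you work directly from Definition~\ref{def:haf} and the vanishing of the cross-block entries, rather than passing through the interpretation $\haf A=\#\mathrm{PM}$. What this buys you is generality: your argument applies to arbitrary (even-dimensional) matrices $A_1,A_2$, not just $0$--$1$ adjacency matrices, whereas the paper's proof as written leans on the perfect-matching interpretation and so is literally valid only in the graph setting stated in the lemma. Conversely, the paper's route is essentially a single sentence once that interpretation is in hand. Since the lemma is only ever invoked for adjacency matrices (in particular for $A^{\oplus2}$), either proof suffices for the paper's purposes.
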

\begin{proof}
  The adjacency matrix of $(V_1,E_1)\cup(V_2,E_2)$ is $A_1\oplus A_2$. Count the number of perfect matchings for both of them. As they are independent, the total number of perfect matchings of $A$ is their product.
\end{proof}
Hence, the proposal is to take two copies of the desired adjacency matrix~$A$ and consider~$A^{\oplus2}=A\oplus A$ instead. We not only get the needed block-diagonal form~\cite{hamilton2016gaussian} but also $A^{\oplus2}_{12}=A^{\oplus2}_{21}$ and $A^{\oplus2}_{22}=(A^{\oplus2}_{11})^\top$ are satisfied. By further using Proposition \ref{prop:Aspectrum} and multiplying by $0<c<1/\la_1$, we get a valid covariance matrix $\s_{cA^{\oplus2}}$. Since $\dim{\s_{cA^{\oplus2}}}=4M$, it is a state given by the canonical $2M$-mode circuit ($2M$ single-mode squeezers followed by a multiport), thus merely doubling the number of modes for any $M$.

\begin{lem}\label{lem:pureAoplusA}
 Let $A$ be an adjacency matrix of a graph. The covariance matrix $\sigma_{cA^{\oplus2}}$ constructed from $c(A\oplus A)$ represents a pure Gaussian state if $c$ is positive and smaller than the maximal eigenvalue of $A$.
\end{lem}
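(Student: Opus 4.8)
The plan is to exploit the fact that forming $A^{\oplus 2}$ automatically casts the matrix into the favorable block structure required by Lemma~\ref{lem:validCM} \emph{and} forces its off-diagonal block to vanish, so that purity can be read off directly from the symplectic spectrum already computed there. In other words, the only real content is the structural observation that doubling zeroes the off-diagonal block; everything else is an application of the machinery built up earlier in this appendix.

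First I would fix the partition. Writing $A^{\oplus 2} = A \oplus A$ as a $4M \times 4M$ matrix and splitting it into four $2M \times 2M$ blocks (the midpoint partition used throughout the $X$-operator framework), the two diagonal blocks both equal $A$ while the two off-diagonal blocks are identically zero. Because $A$ is a symmetric adjacency matrix, the hypotheses of Lemma~\ref{lem:XAsymmetric} are met trivially: with $A^{\oplus 2}_{12}=A^{\oplus 2}_{21}=0$ and $A^{\oplus 2}_{22}=(A^{\oplus 2}_{11})^\top = A$, the matrix $\bbI_{4M}-X_{4M}A^{\oplus 2}$ is symmetric, and by Corollary~\ref{cor:XAcommute} the matrices $X_{4M}$ and $A^{\oplus 2}$ commute. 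The essential point, which I would stress, is that this holds \emph{regardless} of whether the original $A$ obeyed Lemma~\ref{lem:XAsymmetric}: the doubling manufactures the required block form for free.

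Next I would confirm that $\s_{cA^{\oplus 2}}$ is a genuine covariance matrix. The off-diagonal block $0$ commutes with the diagonal block $A$ and is (vacuously) positive semidefinite, so Lemma~\ref{lem:validCM} applies. Positive-definiteness follows from Proposition~\ref{prop:Aspectrum}: the constraint $0<c<1/\la_1$ guarantees via Theorem~\ref{thm:XAcond} that every eigenvalue $c\la_k$ of $cA^{\oplus 2}$ lies in $(-1,1)$, which is also precisely what keeps the radicand in the symplectic-eigenvalue formula positive.

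Finally, purity follows by inspection. In the symplectic-eigenvalue formula of Lemma~\ref{lem:validCM}, the parameters $h_k$ are the eigenvalues of the off-diagonal block; here that block is zero, so $h_k=0$ for every $k$, while the $f_k$ remain the eigenvalues of $A$. Substituting $h_k=0$ collapses the radicand to $(1-c^2 f_k^2)/(1-c^2 f_k^2)=1$, giving $\nu_k=1/2$ for all $k$. Since a Gaussian state is pure exactly when every symplectic eigenvalue equals $1/2$~\cite{weedbrook2012gaussian}, the state is pure, which is precisely the condition already flagged in the Remark following Lemma~\ref{lem:validCM}. I do not expect a genuine technical obstacle: the only point requiring care is checking that the $\oplus$ block decomposition coincides with the midpoint $X$-partition, which it does since both split the $4M$ coordinates at $2M$.
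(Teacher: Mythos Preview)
Your proposal is correct and follows exactly the route the paper intends: the paper's own proof is the single sentence ``The conclusion directly follows from the proof of Lemma~\ref{lem:validCM},'' and you have simply unpacked that reference by noting that $A^{\oplus 2}$ has $A_{12}=0$, so the symplectic eigenvalues from Lemma~\ref{lem:validCM} collapse to $\nu_k=1/2$. Your more careful verification that the $\oplus$ partition coincides with the $X_{4M}$ midpoint partition, and your explicit invocation of Proposition~\ref{prop:Aspectrum} for positive-definiteness, are welcome clarifications but do not deviate from the paper's argument.
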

\begin{proof}
 The conclusion directly follows from the proof of Lemma \ref{lem:validCM}.
\end{proof}

Since the state is pure, $\s_{cA^{\oplus2}}$ can be decomposed as
\begin{equation}
\sigma_{cA^{\oplus2}} = R \s_\mathrm{in} R^\top.
\end{equation}
Here $\s_\mathrm{in}$ written in the quadrature basis represents an input pure covariance matrix of $M$ independent single-mode squeezed states:
\begin{widetext}
\begin{equation}\label{eq:sigma}
\s_\mathrm{in} = \diag{}{\bigg[ \frac{1}{2} \bigg( \frac{1+c|\lambda_1| }{1-c |\lambda_1|}\bigg),
\frac{1}{2} \bigg( \frac{1-c|\lambda_1| }{1+c |\lambda_1|}\bigg),
\hdots,
\frac{1}{2} \bigg( \frac{1+c|\lambda_{2M}| }{1-c |\lambda_{2M}|}\bigg),
\frac{1}{2} \bigg( \frac{1-c|\lambda_{2M}| }{1+c |\lambda_{2M}|}\bigg) \bigg]}.
\end{equation}
\end{widetext}
The amount of squeezing in each mode can be determined by
\begin{equation}
e^{2 r_k} = \frac{1+c|\lambda_k| }{1-c |\lambda_k|},
\end{equation}
where $r_k$ is the squeezing parameter in the $k$-th mode. Note that if $\lambda_k = 0$,  then $r_k = 0$, corresponding to a vacuum state. When
$c\rightarrow1/\la_1$, the maximum squeezing parameter $r_1\to\infty$, corresponding to an infinitely squeezed state. $R$ is an orthogonal symplectic matrix and can be implemented as a multimode passive interferometer. Therefore, a pure Gaussian state with covariance matrix $\s_{cA^{\oplus2}}$ can be generated by an array of single-mode squeezed states or vacuum states followed by a sequence of beam-splitters and phases shifts.

\section{Sampling the Gaussian covariance matrix}\label{app:SGCM}

\subsection{Pure Gaussian covariance matrices}

Our main goal in this paper is to estimate the hafnian of an arbitrary adjacency matrix by sampling the photon number distribution of a certain Gaussian state. The procedure to calculate the hafnian of an adjacency matrix is: (a) find a corresponding covariance matrix of $c (A \oplus A)$,
(b) construct an optical circuit to generate the Gaussian state, (c) sample the output photons and find the probability of the pattern with single photon in each output mode, (d) derive the hafnian via the relation
\begin{align}\label{eq:probAoplA}
\Pr_{cA^{\oplus2}}\nolimits(\underbrace{1, \dots, 1}_{2M})
&=  \prod_{k=1}^{2M} \sqrt{1-c^2 \lambda_k^2} ~\haf{[c(A\oplus A)]} \nonumber\\
&= \prod_{k=1}^{2M} \sqrt{1-c^2 \lambda_k^2} ~c^{2M}\haf^2{A},
\end{align}
namely, the hafnian is 
\begin{eqnarray}
\haf{A} = c^{-M} \sqrt{\Pr_{cA^{\oplus2}}\nolimits(1, \dots, 1)} \prod_{k=1}^{2M} (1-c^2 \lambda_k^2)^{-1/4}, \nonumber\\
\end{eqnarray}
where we have used the fact that $\bar n! = 1$ and $\det \s_Q$ can be easily calculated using the eigenvalues of the covariance matrix.

\subsection{Mixed Gaussian covariance matrices}

According to Lemma \ref{lem:validCM}, there exists a nontrivial class of adjacency matrices $A$ whose covariance matrix $\s_{cA}$ reresents a legitimate Gaussian mixed state. From \eqref{eq:SpecA}, it is easy to find the eigenvalues of $\s_Q = \s_{cA} + \bbI_{2M}/2$,
\begin{equation}
\spec{\s_Q} =  \bigg[ \frac{1}{1-c(f_k + h_k)}, \, \frac{1}{1+c(f_k - h_k)} \bigg],
\end{equation}
for $k = 1, 2, \dots, M$.
Therefore the determinant of $\s_Q$ is
\begin{equation}
\det{\s_Q} = \prod_{k=1}^{M} \frac{1}{\big[1-c(f_k + h_k)\big] \big[ 1+c(f_k - h_k)\big]}.
\end{equation}
By using Eq. \eqref{eq:ProHafOrig}, the probability of measuring one photon in each output mode is
\begin{align}\label{eq:probA-app}
&\Pr_{cA}\nolimits(\underbrace{1,\dots,1}_{M}) \nonumber\\
&= \prod_{k=1}^{M} \sqrt{[1-c(f_k + h_k)] [ 1+c(f_k - h_k)]} ~c^M \haf{} A.
\end{align}

\section{Scalability hacks}\label{app:scalability}

The largest graph eigenvalue seems to be determining the hardness of sampling the corresponding covariance matrix. We are thus led to the following hypothesis:
\begin{conj}\label{CojHardness}
  The largest eigenvalue of an adjacency matrix~$A$ determines the hardness (time complexity) of estimating the hafnian of~$A$ encoded in an optical circuit.
\end{conj}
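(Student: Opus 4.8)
The plan is to make ``hardness'' operational as a sampling cost and then trace its dominant dependence back to $\lambda_1$. First I would fix the estimation model: the target is the Bernoulli probability $p=\mathrm{Pr}_{cA^{\oplus2}}(1,\dots,1)$ of registering one photon in every mode, and to estimate it to fixed relative accuracy from independent clicks a Chernoff bound gives a required sample number $N=O(1/p)$. With the detector repetition rate fixed by hardware, the runtime obeys $T=\Theta(1/p)$ up to a graph-independent prefactor, so the conjecture reduces to identifying which spectral data of $A$ govern the asymptotics of $1/p$ after $p$ has been maximized over the one free knob, the rescaling $c$.

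Next I would import the closed form from Eq.~\eqref{eq:probAoplA},
\begin{equation*}
p(c)=\haf^2{A}\;c^{2M}\prod_{k=1}^{2M}\sqrt{1-c^2\lambda_k^2},\qquad 0<c<1/\lambda_1,
\end{equation*}
and maximize $\log p(c)=2\log\haf{A}+2M\log c+\tfrac12\sum_k\log(1-c^2\lambda_k^2)$ over the admissible interval. The stationarity condition reads $2M/c^2=\sum_k\lambda_k^2/(1-c^2\lambda_k^2)$. By Theorem~\ref{thm:PerFrob} we have $\lambda_1\ge|\lambda_k|$ for all $k$, so the only term on the right that can diverge as $c$ approaches the endpoint is $k=1$; this is the mechanism by which $\lambda_1$, and not any other eigenvalue, fixes the boundary of the feasible set and forces the optimum against it. Writing $c^\ast=(1/\lambda_1)(1-\eta)$ and balancing the $k=1$ term, which diverges as $c\to1/\lambda_1$, against the remaining sum yields $\eta=\Theta(1/M)$, hence $(c^\ast)^{2M}=\lambda_1^{-2M}\,e^{-\Theta(1)}$.

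Assembling the pieces gives the leading expansion
\begin{equation*}
-\log p_{\max}=2M\log\lambda_1-2\log\haf{A}-\tfrac12\sum_{k=1}^{2M}\log\!\big(1-(c^\ast\lambda_k)^2\big)+O(\log M).
\end{equation*}
The first term is the explicit $\lambda_1$-cost: the feasibility ceiling $c<1/\lambda_1$ alone forces the factor $\lambda_1^{-2M}$ into the probability. To close the argument in the regime where the conjecture should hold, I would restrict attention to graphs whose bulk spectrum is small relative to $\lambda_1$ (large spectral gap), so that $c^\ast|\lambda_k|\ll1$ for $k\ge2$, the spectral sum collapses to $O(1)$, and $-\log p_{\max}=2M\log\lambda_1-2\log\haf{A}+O(\log M)$ is controlled by $\lambda_1$ together with the graph-intrinsic target $\haf{A}$. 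The complete graph, where $\lambda_1=2M-1$ and all other eigenvalues equal $-1$, is the cleanest instance to verify this collapse explicitly.

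The main obstacle is precisely that the spectral sum $-\tfrac12\sum_k\log(1-(c^\ast\lambda_k)^2)$ is $\Theta(M)$, and therefore contributes to the exponential \emph{rate}, whenever a constant fraction of the eigenvalues are comparable to $\lambda_1$; for such graphs $\lambda_1$ does not by itself determine the asymptotics, so the conjecture can hold only in the leading-parameter sense or under a spectral-gap hypothesis that I would need to promote into a hypothesis of the theorem. A secondary, more conceptual difficulty is that $T=\Theta(1/p)$ equates an experimental sampling time with computational hardness; upgrading this to a statement about $\#$P-style time complexity is a modeling assumption rather than a consequence of the algebra, so the defensible version of the claim concerns the physical sampling runtime of the encoded circuit rather than a complexity class.
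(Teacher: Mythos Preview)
The statement you are attempting to prove is labeled a \emph{Conjecture} in the paper, and the paper offers no proof of it; it is presented as a hypothesis motivated by numerical observation (``The largest graph eigenvalue seems to be determining the hardness\dots We are thus led to the following hypothesis''). There is therefore no paper proof to compare your argument against.

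That said, your analysis is a genuine advance over what the paper supplies. You correctly operationalize ``hardness'' as $T=\Theta(1/p_{\max})$, pull the closed form from Eq.~\eqref{eq:probAoplA}, derive the correct stationarity condition $2M/c^2=\sum_k\lambda_k^2/(1-c^2\lambda_k^2)$, and isolate the mechanism: the Perron eigenvalue $\lambda_1$ fixes the boundary $c<1/\lambda_1$ of the feasible set, which forces the factor $\lambda_1^{-2M}$ into $p_{\max}$. This is exactly the heuristic the paper is gesturing at but never writes down. One small caveat: the divergence as $c\to1/\lambda_1$ is not carried by $k=1$ alone when $|\lambda_k|=\lambda_1$ for several $k$ (e.g.\ bipartite graphs, where $-\lambda_1$ is also in the spectrum, or disconnected graphs with identical components); this does not break your scaling but it does mean the $\eta=\Theta(1/M)$ balance involves the full multiplicity of the spectral radius.

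You have also correctly identified why this is a conjecture rather than a theorem. The bulk spectral sum $-\tfrac12\sum_k\log(1-(c^\ast\lambda_k)^2)$ is generically $\Theta(M)$ and competes with $2M\log\lambda_1$ at the level of the exponential rate, so $\lambda_1$ alone cannot determine $-\log p_{\max}$ for graphs without a spectral gap. Your proposed resolution---promote a spectral-gap hypothesis into the statement---is the natural fix and would yield a provable version; the paper does not take this step. Your second caveat, that equating sampling time with $\#$P-style time complexity is a modeling choice rather than a deduction, is also well taken and again explains why the authors hedge with ``conjecture.'' In short: your proposal is not a proof of the statement as written (nor could it be, for the reasons you yourself flag), but it is a sharper and more honest treatment of the claim than the paper provides.
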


Based on these insights, we can further `hack' the adjacency matrix to decrease the maximal eigenvalue such that its hafnian remains the same or can be efficiently back-calculated but the sampling probability can be obtained more efficiently. For example, if we construct
\begin{equation}\label{eq:Aprime}
A'=A+\diag{[d_1,\dots,d_{2M}]}
\end{equation}
such that $d_i=d_{i+M}$ then for any $A$ and $d_i\in\bbR$ we get $\haf{}{A}=\haf{}{A'}$. This follows from the irrelevance of the self-loops for the number of perfect matchings or directly from Definition~\ref{def:haf}. 
Another possibility is based on the following claim.
\begin{lem}\label{lem:MultiplyNumber}
  Let $A=(a_{ij})$ be a symmetric matrix of an even dimension and let $B=(b_{ij})$ have the same entries and size as $A$ except for the $k$-th row and column where $\forall i,j$, $b_{kj}=na_{kj}$ and $b_{ik}=na_{ik}$. Then
  \begin{equation}
    \haf{B}=n\,\haf{A}.
  \end{equation}
\end{lem}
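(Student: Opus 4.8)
The plan is to work directly from the combinatorial definition of the short hafnian in Definition~\ref{def:haf}, exploiting the fact that in every partition $\varsigma$ of $(1,\dots,2M)$ into unordered pairs each index is used exactly once. In particular, the distinguished index $k$ belongs to precisely one pair of $\varsigma$. This single structural observation does essentially all of the work.

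First I would fix an arbitrary partition $\varsigma$ and compare the two products $\prod_{i=1}^{M} b_{\varsigma(2i-1)\varsigma(2i)}$ and $\prod_{i=1}^{M} a_{\varsigma(2i-1)\varsigma(2i)}$. Since $k$ appears in exactly one of the $M$ pairs, say $\{k,\ell\}$ with $\ell\neq k$, exactly one factor of the $B$-product is of the form $b_{k\ell}$, and by symmetry of $A$ (hence of $B$) this factor is well defined regardless of the ordering within the pair. By the definition of $B$ we have $b_{k\ell}=n\,a_{k\ell}$, while each of the remaining $M-1$ factors involves only indices different from $k$ and therefore equals the corresponding entry of $A$. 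Consequently the $B$-product equals $n$ times the $A$-product for that same $\varsigma$.

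Then I would sum over all $(2M-1)!!$ partitions. Because the factor $n$ is extracted uniformly for every $\varsigma$ (each partition pairs $k$ with exactly one other index), the sum factors as $\haf{B}=\sum_\varsigma n\prod_{i=1}^{M}a_{\varsigma(2i-1)\varsigma(2i)}=n\,\haf{A}$, which is the claim.

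I anticipate no real obstacle; the only point deserving a word of care is to confirm that the diagonal entry $b_{kk}=n^2a_{kk}$ never enters the computation. This is automatic, since the short hafnian pairs only distinct indices, so neither $a_{kk}$ nor $b_{kk}$ appears in any term and the quadratic scaling on the diagonal is irrelevant. The evenness of the dimension is used only to guarantee that perfect pairings exist, so that Definition~\ref{def:haf} applies, but plays no further role.
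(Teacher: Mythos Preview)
Your proof is correct and follows the same approach as the paper's own argument: both rely on the observation from Definition~\ref{def:haf} that the index $k$ appears in exactly one factor of each product, so the scaling by $n$ is picked up precisely once per term. Your version is simply a more careful elaboration of the paper's two-sentence sketch, including the (harmless) check that the diagonal entry never enters.
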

\begin{proof}
  From the definition of hafnian, Eq.~\eqref{eq:shaf}, we see that the $k$-th row appears in each product exactly once. The same holds for the $k$-th column.
\end{proof}
\begin{rem}
  Note that the way of calculating the hafnian of $A$ from~\cite{hamilton2016gaussian} we use here is slightly different from~\eqref{eq:shaf}. Instead, the partial derivatives result in a product of $(a_{i\varsigma(i)}+a_{\varsigma(i)i})/2$ which becomes~\eqref{eq:shaf} for a symmetric $A$. This is why we required both the $k$-th row and column to be multiplied by~$n$ in~Lemma~\ref{lem:MultiplyNumber}.
\end{rem}

\subsection{Four vertices complete graph}
    We illustrate how to obtain the hafnian from the measurement statistics for $K_4$ -- the complete graph  on four vertices. Its adjacency matrix reads
    \begin{equation}\label{eq:AforK4}
    A=\begin{bmatrix}
        0 & 1 & 1 & 1 \\
        1 & 0 & 1 & 1 \\
        1 & 1 & 0 & 1 \\
        1 & 1 & 1 & 0 \\
      \end{bmatrix}.
    \end{equation}
    Since $M=2$, we get from~\eqref{eq:probAoplA}
    \begin{equation}\label{eq:Prob4MOrig}
    \mathrm{Pr}_{cA^{\oplus2}} (1, 1, 1, 1) =  (1-9c^2)^{1/2} ~(1-c^2)^{3/2}c^{4} \haf^2{A}.
    \end{equation}
    Here $c$ is a free parameter, so we can choose it to maximize the measuring probability. Once we know the probability $\mathrm{Pr}_{cA^{\oplus2}} (1, 1, 1, 1)$ by sampling  the Gaussian covariance matrix $\s_{cA^{\oplus2}}$, we can obtain the hafnian of the four vertices complete graph $K_4$ via Eq. \eqref{eq:Prob4MOrig}.

Here we would like to consider a slightly different problem: how to optimize the measuring probability for a given graph.
For a complete graph $K_4$ we know the hafnian is $\haf{A}=3$. By substituting the hafnian into \eqref{eq:Prob4MOrig}, we find the probability is a function of $c$, which we plot as a black line in Fig.~\ref{fig:Prob4M}. We can see that there exists a particular $c$ between $0$ and $1/3$ that maximizes the probability, and the probability goes to zero if $c\to0$ or $c\to1/3$. The $c \to 0$ limit corresponds to vacuum input, the probability evidently goes to zero. The $c \rightarrow 1/3$ limit corresponds to infinite squeezing in one of the input modes, therefore the probability of measuring low particle number also tends to be zero.

We can add/subtract an arbitrary diagonal matrix to the adjacency matrix to decrease the maximal eigenvalue without changing the hafnian. As a result of a numerical optimization considering all the constraints on a Gaussian covariance matrix, we get the highest probability of measuring the hafnian for
    \begin{equation}\label{eq:ApforK4}
    A' =
     \begin{bmatrix}
        -2/3 & 1 & 1 & 1 \\
        1 & -2/3 & 1 & 1 \\
        1 & 1 & -2/3 & 1 \\
        1 & 1 & 1 & -2/3 \\
      \end{bmatrix}.
    \end{equation}
    Given $\spec{A'}=(7/3,-5/3,-5/3,-5/3)$, then $\s_{c A^{\prime \oplus2}}$ is a valid Gaussian covariance matrix provided $0<c<3/7$. The probability of measuring one photon in each output mode becomes
    \begin{equation}\label{eq:Prob4MSubtract}
    \mathrm{Pr}_{cA^{\prime \oplus2}}(1, 1, 1, 1) = {1\over81}(49c^2-9)^{1/2}(25c^2-9)^{3/2}c^4\haf^2{A}.
    \end{equation}
    The probability as a function of $c$ is plotted as an orange dashed line in Fig.~\ref{fig:Prob4M}. Similarly, we find that there exists a $c$ between $0$ and $3/7$ that maximizes the probability. Indeed, the maximum probability is greater than the maximum probability corresponding to the original adjacency matrix~$A$. This illustrates the power of adding a diagonal matrix to the original $A$.
    \begin{figure}[ht!]
    \includegraphics[width=8.3cm]{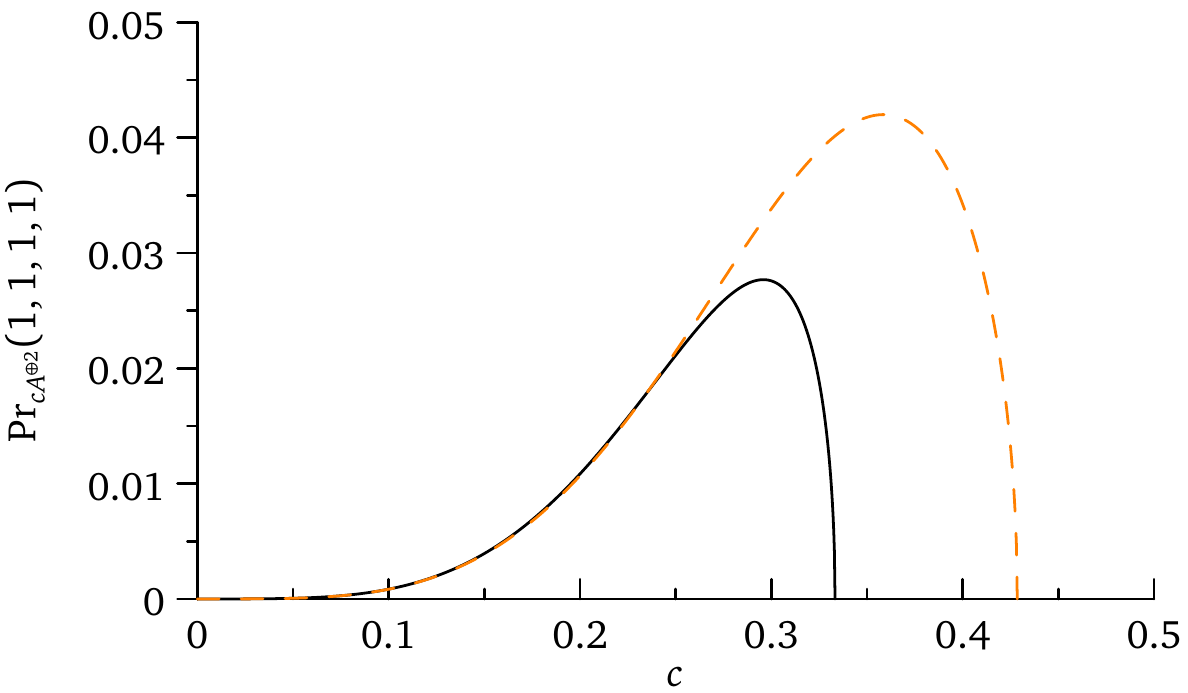}
    \caption{Sampling a pure state: probability of measurement of one photon in each output mode as a function of the rescaling parameter $c$. The black line corresponds to
    the original adjacency matrix $A$ and the orange dashed line corresponds to $A'$ in~\eqref{eq:ApforK4}. }
    \label{fig:Prob4M}
    \end{figure}

\begin{figure}[h]
    \includegraphics[width=8.3cm]{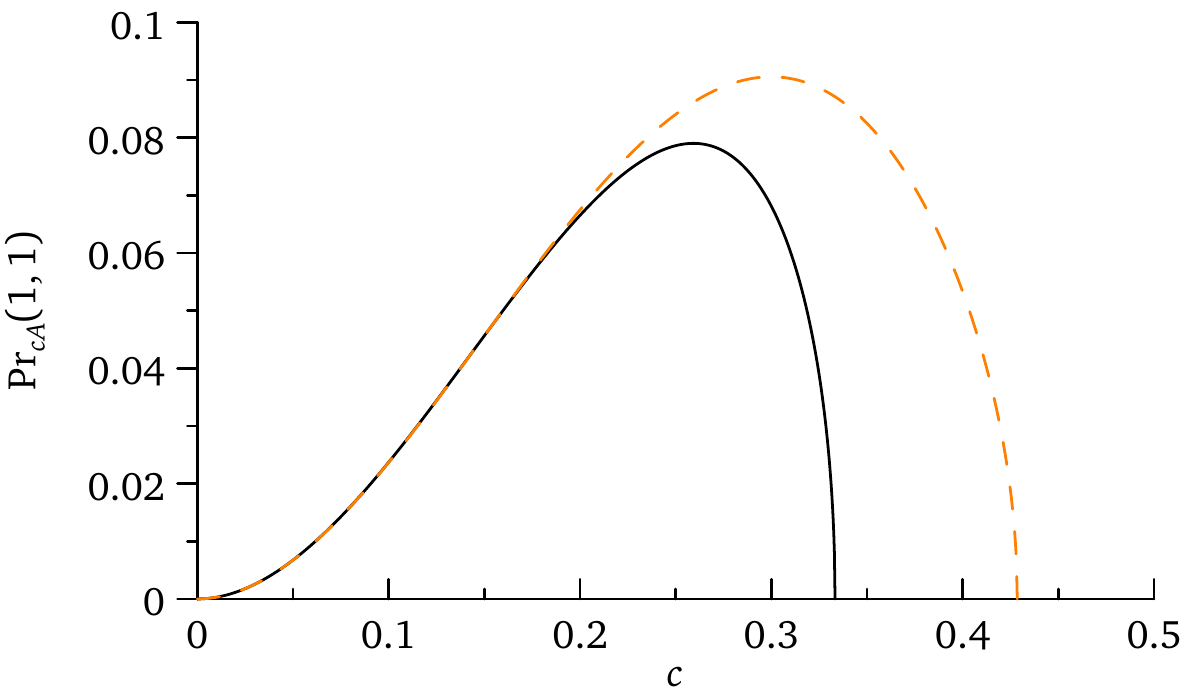}
    \caption{Sampling a mixed state: probability of measuring one photon in each output mode without doubling the modes. The black line corresponds to the original adjacency matrix $A$, Eq.~\eqref{eq:AforK4}, while the orange dashed line corresponds to~\eqref{eq:ApforK4}. }
    \label{fig:Prob2Mmixed}
    \end{figure}

    To see how advantageous a sampling of a mixed state is, we use Eq.~\eqref{eq:probA-app} and find
    \begin{equation}
    \mathrm{Pr}_{cA}(1, 1)=(1-c)\sqrt{(1-3c)(1+c)}\,c^2\haf{A}.
    \end{equation}
    By sampling the optimized covariance matrix $\s_{cA'}$ for $A'$ again given by~\eqref{eq:ApforK4} we find
    \begin{equation}
    \mathrm{Pr}_{cA'}(1, 1)={1\over9}(3-5c)\sqrt{(3-7c)(3+5c)}\,c^2\haf{A}.
    \end{equation}
    Fig.~\ref{fig:Prob2Mmixed} compares sampling probability of these two methods. Compared to a pure state sampling method in Fig.~\ref{fig:Prob4M}, the probability of the desired detection event is about $30 \%$ higher.
        We have not found any advantage by using Lemma~\ref{lem:MultiplyNumber}.

\subsection{Twenty vertices complete graph}
  Let $K_{20}$ be the complete graph on twenty vertices. It is known that $\#PM(K_{20})=654729075$.  Highly connected graphs on many vertices are demanding for a classical computer if the number of perfect matchings is counted by brute force. 
  For a graph of size $20$ it is still possible to count the perfect matchings in seconds \cite{bjorklund2012counting}. How does our quantum-based approach perform? The adjacency matrix gives rise to a mixed ten-mode Gaussian covariance matrix ($M=10$). From Eq.~\eqref{eq:probA-app}, we find
  \begin{subequations}
  \begin{align}%
    \mathrm{Pr}_{cA}(1,\dots,1) & = (1-c)^5 (1+c)^{9/2}\sqrt{1-19 c}\,c^{10}\haf A_{K_{20}},\label{eq:AforK10}\\
    \mathrm{Pr}_{cA'}(1,\dots,1) & = (1-7c)^{5}(1+7c)^{9/2} \sqrt{1-13c}\,c^{10} \haf A_{K_{20}} \label{eq:AprimeforK10}.
  \end{align}
    \end{subequations}
  The first expression is based on the adjacency matrix for $K_{20}$ and the second one is $A'$ whose diagonal elements were optimized. The single-photon probabilities are depicted in Fig.~\ref{fig:K20} and we see a dramatic increase in the probability of detection of the desired pattern. Hence, with a reasonably fast quantum sources and detectors we could sample the circuit and gather the necessary statistics within seconds.

\section{A proof-of-principle experiment}\label{app:PPE}

In this appendix, we study how to construct a circuit and what input states are required to generate the covariance matrix corresponding to a given adjacency matrix.
To illustrate the essential steps without involving complicated technical details, we study a concrete example of the complete graph on four vertices.
We find the eigenvalues of $A$ from~\eqref{eq:AforK4} to be $\spec{A}=(3,-1,-1,-1)$ and so $\lambda_1=3$.
The covariance matrix in the Heisenberg basis is found to be
\begin{widetext}
\begin{equation}\label{eq:CMforK4}
    \s_{cA}=\begin{bmatrix}
    \begin{array}{cccc}
     -\frac{3 c^3+c^2+c-1}{2 (c-1) \left(3 c^2+2 c-1\right)} & \frac{c}{-3 c^2-2 c+1} & \frac{2 c^2}{3 c^3-c^2-3 c+1} & \frac{c}{-3 c^2-2 c+1} \\
     \frac{c}{-3 c^2-2 c+1} & -\frac{3 c^3+c^2+c-1}{2 (c-1) \left(3 c^2+2 c-1\right)} & \frac{c}{-3 c^2-2 c+1} & \frac{2 c^2}{3 c^3-c^2-3 c+1} \\
     \frac{2 c^2}{3 c^3-c^2-3 c+1} & \frac{c}{-3 c^2-2 c+1} & -\frac{3 c^3+c^2+c-1}{2 (c-1) \left(3 c^2+2 c-1\right)} & \frac{c}{-3 c^2-2 c+1} \\
     \frac{c}{-3 c^2-2 c+1} & \frac{2 c^2}{3 c^3-c^2-3 c+1} & \frac{c}{-3 c^2-2 c+1} & -\frac{3 c^3+c^2+c-1}{2 (c-1) \left(3 c^2+2 c-1\right)} \\
    \end{array}
    \end{bmatrix}.
\end{equation}
\end{widetext}
The corresponding symplectic eigenvalues of are
\begin{equation}\label{eq:sympleigsCMforK4}
  \nu_{1,2}={1\over2},\quad\nu_{3,4}=\frac{1}{2} \sqrt{\bigg( \frac{1+3c}{1-3c} \bigg)\bigg( \frac{1+c}{1-c} \bigg)}.
\end{equation}
This implies the Gaussian state represented by $\sigma_{cA}$ is a mixed state~\cite{weedbrook2012gaussian}. To obtain a pure state, we follow~Lemma~\ref{lem:pureAoplusA} and construct a covariance matrix $\s_{cA^{\oplus 2}}$ for $c(A\oplus A)$ in the Heisenberg basis using Eq.~\eqref{eq:sigmaA-1}:
\begin{equation}\label{eq:CMexample}
\s_{cA^{\oplus 2}} = \frac{1}{(1-c^2)(1-9c^2)}
\begin{bmatrix}
  C & D \\
  D & C
\end{bmatrix}
\end{equation}
where $C$ and $D$ are $4\times 4$ matrices:
\begin{widetext}
\begin{equation}
C=
\begin{bmatrix}
  \frac{1-4c^2-9c^4}{2} & 2c^2 & 2c^2 & 2c^2 \\
  2c^2 & \frac{1-4c^2-9c^4}{2} & 2c^2 & 2c^2 \\
  2c^2 & 2c^2 & \frac{1-4c^2-9c^4}{2} & 2c^2 \\
  2c^2 & 2c^2 & 2c^2 & \frac{1-4c^2-9c^4}{2}
\end{bmatrix},
~~~~~~~~
D=
\begin{bmatrix}
  6c^3 & c(1-3c^2) & c(1-3c^2) & c(1-3c^2) \\
  c(1-3c^2) & 6c^3 & c(1-3c^2) & c(1-3c^2) \\
  c(1-3c^2) & c(1-3c^2) & 6c^3 & c(1-3c^2) \\
  c(1-3c^2) & c(1-3c^2) & c(1-3c^2) & 6c^3
\end{bmatrix}.
\end{equation}
The symplectic eigenvalues of $\s_{cA^{\oplus 2}}$ are $\nu_k=1/2$ for all $k$, showing that the state is pure. By ordinary diagonalization we obtain

\begin{equation}\label{eq:specK4double}
K \s_{cA^{\oplus2}} K^\top
= \diag{\bigg[\frac{1+3c}{2(1-3c)},\frac{1-3c}{2(3c+1)},\frac{1+c}{2(1-c)},\frac{1-c}{2(1+c)},\frac{1+c}{2(1-c)},\frac{1-c}{2(1+c)},\frac{1+c}{2(1-c)},\frac{1-c}{2(1+c)}\bigg]}.
\end{equation}
\end{widetext}

We are going to find the corresponding canonical circuit generating the covariance matrix $\sigma_{cA^{\oplus 2}}$. Eigenspectrum~\eqref{eq:specK4double} is an array squeezing parameters for four single-mode squeezers and the parameter $c$ tunes how many photons are injected to the multiport $K$ that follows. By virtue of diagonalization, the eigenspectrum is in the quadrature basis. But the properties of the multiport, namely how to be able to decompose it to elementary beam splitters and phase shifts, are best seen in the Heisenberg basis. By transforming it we get, as expected~\cite{hamilton2016gaussian}, $K=T\oplus\overline{T}$, where
\begin{equation}\label{eq:multipK4double}
  T=\begin{bmatrix}
     \frac{\frac{1}{2}-\frac{i}{2}}{\sqrt{2}} & \frac{\frac{1}{2}-\frac{i}{2}}{\sqrt{2}} & \frac{\frac{1}{2}-\frac{i}{2}}{\sqrt{2}} & \frac{\frac{1}{2}-\frac{i}{2}}{\sqrt{2}} \\
     -\frac{1}{2}+\frac{i}{2} & 0 & 0 & \frac{1}{2}-\frac{i}{2} \\
     -\frac{\frac{1}{2}-\frac{i}{2}}{\sqrt{3}} & 0 & \frac{1-i}{\sqrt{3}} & -\frac{\frac{1}{2}-\frac{i}{2}}{\sqrt{3}} \\
     -\frac{\frac{1}{2}-\frac{i}{2}}{\sqrt{6}} & \left(\frac{1}{2}-\frac{i}{2}\right) \sqrt{\frac{3}{2}} & -\frac{\frac{1}{2}-\frac{i}{2}}{\sqrt{6}} & -\frac{\frac{1}{2}-\frac{i}{2}}{\sqrt{6}}
    \end{bmatrix}.
\end{equation}
By decomposing $T$ into an array of six beam splitters and some phase shifts according to~\cite{clements2016optimal}, we find
\begin{equation}\label{eq:Tdecomposed}
  T=B_{23}(-\theta_6)B_{34}(-\theta_5)\Delta B_{12}(\theta_4)B_{23}(\theta_3)B_{34}(\theta_2)B_{12}(\theta_1),
\end{equation}
where
\begin{align}
&(\t_i) = \\
&\bigg({-2 \tan^{-1}{\frac{1}{3}}}, -\frac{\pi }{2}, -2 \tan^{-1}{\sqrt{5}}, -2\tan^{-1}{\sqrt{\frac{3}{2}}},-\pi ,-\frac{\pi}{3} \bigg)\nonumber
\end{align}
and $\D=e^{-i\pi/4}\diag{[1,1,-1,1]}$. The phase operator $\D$ can be further moved left or right if the sign of some $\t_i$'s are swapped.

The outlined procedure of doubling the dimension of~$A$ (or, equivalently, doubling the number of modes of the circuit) will work for any~$A$ as a result of~Lemma~\ref{lem:pureAoplusA}. Besides the obvious downside of a slightly bigger circuit, the main disadvantage is an increase in complexity of finding the hafnian of~$A$ as discussed in the previous section. Hence, if $\s_{cA}$ represents a valid Gaussian covariance matrix like in the case of~\eqref{eq:AforK4} for $0<c<1/3$, it is highly advantageous to prepare the corresponding circuit instead of~\eqref{eq:specK4double} and~\eqref{eq:multipK4double}. Since $A$ is not block-diagonal, $\s_A$ is not canonical and we need the following results to be able to derive the circuit generating it.
\begin{thm}[Williamson,~\cite{williamson1936algebraic}]\label{thm:Williamson}
  Let $\s$ be a positive-definite real matrix of dimension $2M$. Then there exists a symplectic matrix $S$  such that
  \begin{equation}\label{eq:Williamson}
    S^\top\s S=\bigoplus_{i=1}^{M}\nu_i\bbI_2
  \end{equation}
\end{thm}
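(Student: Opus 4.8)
The plan is to reduce Williamson's theorem to the classical orthogonal normal form of a real antisymmetric matrix. Throughout, $\Om$ denotes the symplectic form of~\eqref{eq:OmegaXPXP}, and I call $S$ symplectic when $S\Om S^\top=\Om$. Since $\s$ is real symmetric and positive-definite, I would first take its unique positive-definite symmetric square root $\s^{1/2}$ and introduce the auxiliary matrix $A=\s^{1/2}\,\Om\,\s^{1/2}$. Because $\s^{1/2}$ is symmetric and $\Om^\top=-\Om$, one checks at once that $A^\top=-A$, so $A$ is real antisymmetric. The entire argument then rests on the spectral structure of $A$.

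The orthogonal normal form for real antisymmetric matrices supplies an orthogonal matrix $O$ and numbers $d_1,\dots,d_M>0$ with $O^\top A\,O=\bigoplus_{i=1}^{M}d_i\,\om_i$, which I rewrite as $A=O\,D^{1/2}\Om\,D^{1/2}O^\top$ with $D=\bigoplus_{i}d_i\bbI_2$, using $D^{1/2}\Om\,D^{1/2}=\bigoplus_i d_i\om_i$. I would then define the candidate symplectic matrix
\[
S=\s^{-1/2}O\,D^{1/2}.
\]
Symplecticity is immediate by substitution: $S\Om S^\top=\s^{-1/2}\bigl(O D^{1/2}\Om D^{1/2}O^\top\bigr)\s^{-1/2}=\s^{-1/2}A\,\s^{-1/2}=\Om$. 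The diagonalizing property is equally mechanical, $S^\top\s S=D^{1/2}O^\top\s^{-1/2}\s\,\s^{-1/2}O D^{1/2}=D^{1/2}O^\top O D^{1/2}=D=\bigoplus_i d_i\bbI_2$, which is precisely the asserted form with $\nu_i=d_i$.

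The genuinely nontrivial input is the orthogonal normal form of $A$, which I would derive from the spectral theorem for normal matrices: $A$ is real antisymmetric hence normal, its eigenvalues are purely imaginary and occur in conjugate pairs $\pm i d_i$, and the real and imaginary parts of its eigenvectors assemble into a real orthonormal basis realizing the block form $\bigoplus_i d_i\om_i$. Two bookkeeping points must be settled along the way. First, every $d_i$ is nonzero, because $\det A=\det\s\cdot\det\Om=\det\s\neq0$, so $A$ is nonsingular and no block degenerates. Second, each $d_i$ can be taken positive: if a block carries a negative coefficient, swapping its two basis vectors (an orthogonal move) flips the sign of $\om_i$. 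No uniqueness statement is required, so I would stop at existence; as an aside, $A$ is similar to $\Om\s$ via $\s^{1/2}$, so the $\nu_i=d_i$ are the moduli of the eigenvalues of $\Om\s$, but this observation is not needed for the proof.
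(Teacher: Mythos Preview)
Your argument is correct and is the standard route to Williamson's theorem. Note, however, that the paper does not actually prove Theorem~\ref{thm:Williamson}: it is quoted as a classical result with a citation to~\cite{williamson1936algebraic}, so there is no in-paper proof to compare against line by line.

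That said, the paper does make implicit use of exactly your construction when it builds the diagonalizing symplectic in practice. Immediately after stating the theorem it records the real Schur/Youla normal form for antisymmetric matrices (Theorem~\ref{thm:SchrReal}) and then, following~\cite{simon1999congruences}, writes $S=(\s'/2)^{-1/2}Q\,\s_\mathrm{in}^{1/2}$ in Eq.~\eqref{eq:S}, where $Q$ is the orthogonal matrix bringing $\s^{\prime-1/2}\Om\,\s^{\prime-1/2}$ to block form. This is your recipe with the cosmetic substitution $\s\leftrightarrow\s^{-1}$: you block-diagonalize $\s^{1/2}\Om\,\s^{1/2}$ and set $S=\s^{-1/2}O\,D^{1/2}$, whereas the paper block-diagonalizes $\s^{-1/2}\Om\,\s^{-1/2}$, so its $\kappa_i$ are the reciprocals of your $d_i$ and $\s_\mathrm{in}^{1/2}$ plays the role of your $D^{1/2}$. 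The underlying idea---reduce to the orthogonal normal form of a real antisymmetric matrix built from $\s^{\pm1/2}$ and $\Om$---is identical.
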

The RHS of Eq.~\eqref{eq:Williamson} is inevitably in the quadrature basis. A special case of the real Schur decomposition is stated as follows~\cite{youla1961normal,zumino1962normal}:
\begin{thm}\label{thm:SchrReal}
  For every real antisymmetric matrix $M$ there exists an orthogonal matrix $Q$ such that
  \begin{equation}\label{eq:Schur}
    QMQ^\top=\tilde{M},
  \end{equation}
  where
  $$
  \tilde{M}=\bigoplus_{i=1}\begin{bmatrix}
                       0 & \kappa_i \\
                       -\kappa_i & 0 \\
                     \end{bmatrix},
  $$
  is called a normal antisymmetric form and $\kappa_i\in\bbR$.
\end{thm}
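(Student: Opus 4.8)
The plan is to prove this as the antisymmetric specialization of the real Schur decomposition and to carry it out by induction on the size of $M$. The one structural fact that makes antisymmetry special is that the spectrum is purely imaginary: since $M$ is real with $M^\top=-M$, the matrix $iM$ is Hermitian, so its eigenvalues are real, whence every eigenvalue of $M$ has the form $i\ka$ with $\ka\in\bbR$. Being real, $M$ has a spectrum closed under complex conjugation, so the nonzero eigenvalues occur in pairs $\pm i\ka$, and these pairs are exactly what will furnish the $2\times 2$ blocks.

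First I would extract from each such pair a two-dimensional real invariant subspace. Writing a complex eigenvector for $i\ka$ as $v=a+ib$ with $a,b\in\bbR^{2M}$, the eigenvalue equation $Mv=i\ka v$ splits into $Ma=-\ka b$ and $Mb=\ka a$, so $\operatorname{span}\{a,b\}$ is $M$-invariant. Next I would verify that $a$ and $b$ may be taken orthonormal: antisymmetry gives $a^\top M a=0$, and substituting $Ma=-\ka b$ yields $\ka\,a^\top b=0$, hence $a\perp b$ when $\ka\neq0$; comparing $a^\top M b=\ka\lvert a\rvert^2$ with $a^\top M b=-(Ma)^\top b=\ka\lvert b\rvert^2$ forces $\lvert a\rvert=\lvert b\rvert$. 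After normalization the restriction of $M$ to this plane is exactly $\begin{bmatrix}0&\ka\\-\ka&0\end{bmatrix}$.

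The induction then hinges on passing to the orthogonal complement. Because $M$ is antisymmetric (hence normal), the orthogonal complement of an invariant subspace is again invariant: if $W$ is $M$-invariant and $u\perp W$, then for every $w\in W$ one has $w^\top M u=-(Mw)^\top u=0$, since $Mw\in W$ and $u\perp W$; thus $Mu\in W^\perp$. I would therefore peel off one $2\times 2$ block at a time, each step restricting $M$ to the orthogonal complement—whose dimension drops by two—and collect the accumulated orthonormal planar bases into the columns of a single orthogonal matrix $Q$. The zero eigenvalues are handled last: on $\ker M$ the map is identically zero, contributing $\begin{bmatrix}0&0\\0&0\end{bmatrix}$ blocks, i.e. the $\ka_i=0$ instances of the stated form (with a single leftover zero entry if the dimension is odd).

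I expect the main obstacle to be organizational rather than deep: cleanly verifying the orthogonality and equal-norm relations for $a,b$ in the $\ka\neq0$ case, and arranging the inductive bookkeeping so that the planar bases genuinely assemble into one orthogonal $Q$ with $QMQ^\top=\tilde M$. No hard analysis enters; the entire content is the normality of $M$ together with the reality constraint on $\spec M$.
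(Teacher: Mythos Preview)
The paper does not supply its own proof of this theorem: it is quoted as a known structural result (a special case of the real Schur decomposition, with citations to Youla and Zumino) and then used as a tool in constructing the symplectic diagonalization. So there is nothing in the paper to compare against; your task reduces to whether your argument stands on its own.

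It does. Your outline is the standard proof: use that $iM$ is Hermitian to force $\spec M\subset i\bbR$, split a complex eigenvector $v=a+ib$ into real and imaginary parts to obtain a two-dimensional real invariant plane, verify from antisymmetry that $a\perp b$ and $|a|=|b|$ (your computations $a^\top Ma=0\Rightarrow \ka\,a^\top b=0$ and $a^\top Mb=\ka|a|^2=\ka|b|^2$ are correct), and then induct by passing to the orthogonal complement, which is again $M$-invariant by normality. The kernel contributes the $\ka_i=0$ blocks. One cosmetic point: if you place the orthonormal basis vectors as the \emph{columns} of $Q$, you obtain $Q^\top M Q=\tilde M$, whereas the statement is written as $QMQ^\top=\tilde M$; so either assemble the basis as the \emph{rows} of $Q$, or note at the end that you are relabeling $Q\mapsto Q^\top$.
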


Once we find the diagonalizing symplectic matrix we invert~\eqref{eq:Williamson} by writing
\begin{equation}\label{eq:WilliamsonInv}
  \s=S^{-\top}\s_\mathrm{in}S^{-1},
\end{equation}
where $\s_\mathrm{in}$ is the input thermal state. The circuit is completed by the symplectic singular value decomposition (SVD) of $S$~\cite{arvind1995real,braunstein2005squeezing,xu2003svd}. In more detail, let $Sp(2M,\bbR)$ denote the group of (real) symplectic matrices. As a result of the SVD, a symplectic matrix $S\in Sp(2M,\bbR)$ can be expressed as $S=K\Sigma L^\top$, where $\Sigma\geq0$ and $K,L\in K(2M)$ are elements of the orthogonal symplectic group~$K(2M)=SO(2M)\cap Sp(2M,\bbR)$~\cite{arvind1995real,xu2003svd}. The SVD is not unique and it may happen that the matrices $K,L$  are orthogonal but not symplectic. To obtain it in the desirable form, we rewrite the SVD of $S$ as the left polar decomposition: $S=(K\Sigma K^\top)(KL^\top)= PO$, where $P=(SS^\top)^{1/2}$ is real positive-definite, symmetric and $O$ is orthogonal. Moreover, $P$ and $O$ are symplectic~\cite{ferraro2005gaussian} and $\Sigma$ is a diagonal matrix containing the eigenvalues of $P$. Hence, the desired circuit to generate $\sigma$ from $\s_\mathrm{in}$ is $K\Sigma^{-1}L^\top$, where $K,L^\top$ are multiports and $\Sigma^{-1}$ is an array of single-mode squeezers.

Note, however, that $\s_A$ is in the Heisenberg basis. By transforming it to the quadrature basis and multiplying by 2 we define
  \begin{equation}\label{eq:CMforK4Quadrature}
    \s'=2\begin{bmatrix}
    \begin{array}{cccc}
     \frac{3 c^2+1}{-6 c^2-4 c+2} & 0 & \frac{2 c}{-3 c^2-2 c+1} & 0 \\
     0 & \frac{c+1}{2-2 c} & 0 & 0 \\
     \frac{2 c}{-3 c^2-2 c+1} & 0 & \frac{3 c^2+1}{-6 c^2-4 c+2} & 0 \\
     0 & 0 & 0 & \frac{c+1}{2-2 c} \\
    \end{array}
    \end{bmatrix}.
  \end{equation}
  To find $S$ from Eq.~\eqref{eq:Williamson}, using the notation of~Eq.~\eqref{eq:WilliamsonInv}, we follow the standard procedure (see e.g.~\cite{simon1999congruences}) and write
  \begin{equation}\label{eq:S}
    S=(\s'/2)^{-1/2}Q\s_\mathrm{in}^{1/2},
  \end{equation}
  where $Q$ is an orthogonal matrix needed to transform the antisymmetric matrix $\s^{\prime-1/2}\Omega\s^{\prime-1/2}$ into a normal antisymmetric form according to Theorem~\ref{thm:SchrReal}. Given $S$, we apply the SVD and recover the symplectic circuit in the quadrature basis. We find
  \begin{equation}\label{eq:Q}
    Q=\begin{bmatrix}
    \begin{array}{cccc}
     0 & -\frac{1}{\sqrt{2}} & 0 & \frac{1}{\sqrt{2}} \\
     \frac{1}{\sqrt{2}} & 0 & -\frac{1}{\sqrt{2}} & 0 \\
     0 & \frac{1}{\sqrt{2}} & 0 & \frac{1}{\sqrt{2}} \\
     -\frac{1}{\sqrt{2}} & 0 & -\frac{1}{\sqrt{2}} & 0 \\
    \end{array}
      \end{bmatrix}
  \end{equation}
  which is indeed orthogonal and
  \begin{widetext}
  \begin{equation}\label{eq:SexaK4}
    S=\frac{1}{\sqrt{2}} \begin{bmatrix}
    \begin{array}{cccc}
     0 & -\sqrt{\frac{1+c}{1- c}} & 0 & \sqrt[4]{\frac{(1+c)(1-3c)}{(1-c)(1+3 c)}} \\
     \sqrt{\frac{1-c}{c+1}} & 0 & -\sqrt[4]{\frac{(1-c)(1+3c)}{(1+c)(1-3 c)}} & 0 \\
     0 & \sqrt{\frac{c+1}{1- c}} & 0 & \sqrt[4]{\frac{(1+c)(1-3c)}{(1-c)(1+3 c)}} \\
     -\sqrt{\frac{1-c}{c+1}} & 0 & -\sqrt[4]{\frac{(1-c)(1+3c)}{(1+c)(1-3 c)}} & 0 \\
    \end{array}
    \end{bmatrix},
  \end{equation}
  where we check that $S$ is symplectic and diagonalizes $\s$ according to Theorem~\ref{thm:Williamson} leading to
  \begin{equation}\label{eq:symplDiagExpl}
     S^\top\s S=\diag{\bigg[\frac{1}{2},\frac{1}{2},\frac{1}{2} \sqrt{\bigg( \frac{1+3c}{1-3c} \bigg)\bigg( \frac{1+c}{1-c} \bigg)}, \frac{1}{2} \sqrt{\bigg( \frac{1+3c}{1-3c} \bigg)\bigg( \frac{1+c}{1-c} \bigg)}\bigg]}
  \end{equation}
  written in the quadrature basis in agreement with~\eqref{eq:sympleigsCMforK4}. The SVD of $S$ then yields
    \begin{equation}\label{eq:SVDK4a}
    K=\begin{bmatrix}
    \begin{array}{cccc}
     0 & \frac{1}{\sqrt{2}} & 0 & -\frac{1}{\sqrt{2}} \\
     -\frac{1}{\sqrt{2}} & 0 & \frac{1}{\sqrt{2}} & 0 \\
     0 & \frac{1}{\sqrt{2}} & 0 & \frac{1}{\sqrt{2}} \\
     -\frac{1}{\sqrt{2}} & 0 & -\frac{1}{\sqrt{2}} & 0 \\
    \end{array}
    \end{bmatrix},\quad
    L^\top=\begin{bmatrix}
    \begin{array}{cccc}
     0 & 0 & 1 & 0 \\
     0 & 0 & 0 & 1 \\
     1 & 0 & 0 & 0 \\
     0 & 1 & 0 & 0 \\
    \end{array}
    \end{bmatrix}
  \end{equation}
  and
    \begin{equation}\label{eq:SVDK4b}
    \Sigma^{-1}=\diag{{\bigg[\bigg(\frac{(1+c)(1-3c)}{(1-c)(1+3 c)}\bigg)^{1/4}, \bigg(\frac{(1+c)(1-3c)}{(1-c)(1+3 c)}\bigg)^{-1/4}, \sqrt{\frac{1+c}{1-c}} ,\sqrt{\frac{1-c}{1+c}}\bigg]}}.
  \end{equation}
  \end{widetext}
  The matrices are still in the quadrature basis and $K$ and $L^\top$ are orthogonal and symplectic as desired. Then, the circuit can be read off of these matrices in the Heisenberg basis and we arrive at the simple circuit in Fig.~\ref{fig:K4mixcirc}.
  \begin{figure}[h]
  \resizebox{7cm}{!}{\includegraphics{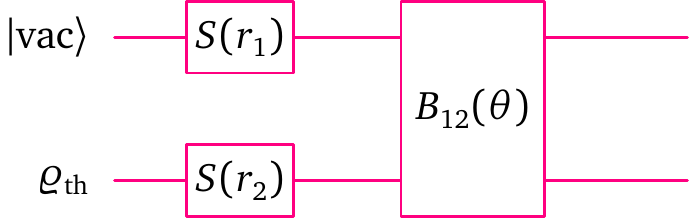}}
    \caption{The circuit to generate the covariance matrix in Eq.~\eqref{eq:CMforK4}. From~\eqref{eq:SVDK4b} we get $r_1=\ln{\sqrt{\frac{1+c}{1-c}}}$ and $r_2=\ln{\left(\frac{(1+c)(1-3c)}{(1-c)(1+3 c)}\right)^{1/4}}$ for the squeezing parameters. The thermal state density matrix is 
    $\varrho_{\text{th}} = \frac{2}{2\nu+1} \sum_{n=0}^{\infty} \big(\frac{2\nu-1}{2\nu+1} \big)^n |n\rangle \langle n|$ with
    $\nu= \frac{1}{2} \sqrt{\big( \frac{1+3c}{1-3c} \big)\big( \frac{1+c}{1-c} \big)}$. 
    From $K$ in~\eqref{eq:SVDK4a} we obtain $\theta=\pi/2$ and because $L^\top$ is a mere swap we took it into account by swapping the input modes.}\label{fig:K4mixcirc}
\end{figure}

\section{Subgraph sampling}\label{app:SubgraphSampling}
\subsection{Complete graphs}\label{app:SubgraphSampling-1}
\begin{defi}
  An $n$-extended complete graph $K^\uparrow_{2M}$ on $2M$ vertices is another, larger, complete graph $K_{2nM}$ such that $n\in\bbZ_{>0}$.
\end{defi}
Let $\euK_M=K_{2M}\oplus K_{2M}$ be two copies of a complete graph on $2M$ vertices. Then $\euK^\uparrow_M=\euK_{nM}$. The goal is to make the extension large enough to find a sufficient number of copies of the desired  graph. Following~Lemma~\ref{lem:scaledHaf} and~\ref{lem:hafProduct}, we find $\haf{[cA_{\euK_M}]}=c^{2M}((2M-1)!!)^2$ and then from~Eq.~\eqref{eq:ProHafOrig} we get
\begin{equation}\label{eq:ProbKext}
  \Pr{\euK^\uparrow_M}=\binom{2nM}{2M}{1\over \bar n! \sqrt{\det{\s_{Q,\euK^\uparrow_M}}}}c^{2M}((2M-1)!!)^2.
\end{equation}
The binomial coefficient tells us how many measurement patterns corresponding to the studied graph $\euK_M$ there is in~$\euK^\uparrow_{M}$. In order to calculate the determinant we have to find the eigenvalues of the adjacency matrix $cA_{\euK^\uparrow_M}$. Recall that
$$
\haf{[cA_{\euK^\uparrow_M}]}=\haf\,{[c(A_{\euK^\uparrow_M}-d\,\bbI_{4nM})]}
$$
and we are looking for the eigenvalues of the more general matrix on the RHS. We find
\begin{widetext}
\begin{equation}\label{eq:specKext}
  \spec{[c(A_{\euK^\uparrow_M}-d\bbI_{4nM})]}=\big((c(1+d))_{1}^{2nM-1}),(-c(1+d))_{1}^{2nM-1},\pm(c(2nM-1-d))\big)
\end{equation}
where the symbol $(x)^n_1$ stands for $(\underbrace{x, \hdots, x}_n)$.
From Eq.~\eqref{eq:sigma} and $\s_{Q,\euK^\uparrow_M}=\s_{\euK^\uparrow_M}+\bbI_{4n M}/2$ we get
\begin{equation}\label{eq:detKext}
  \det{\s_{Q,\euK^\uparrow_M}}=\big((-1+c^2 (-(2nM-1)+d)^2) (-1+c+c d)^{2nM-1} (1+c+c d)^{2nM-1}\big)^{-1}.
\end{equation}
\end{widetext}

\subsection{Complete graphs with one edge removed}\label{app:SubgraphSampling-2}
\begin{defi}
  Let $M_m$ be a space of real matrices of size $m$. A mapping $\chi_n:M_m\mapsto M_{nm}$ is called $n$-inflation, defined for  any $A=(a_{ij})\in M_m$ as
   \begin{equation}\label{eq:inflation}
     \chi_n(a_{ij})=a_{ij}\bbJ_n,
   \end{equation}
   where the RHS is a minor of size $n\times n$ and $\bbJ_n$ is an all-ones square matrix of size $n$.
\end{defi}
The hafnian of a general matrix is difficult to calculate. But by carefully removing the edges of a complete graph, whose hafnian is known, we can keep track of all the perfect matchings that are lost.  The simplest case is one edge removed and by doing so we discard $(2M-3)!!$ perfect matchings.
We shall name such a graph $R_{2M}$.  Hence $\haf{R_{2M}}=(2M-1)!!-(2M-3)!!=(2M-3)!!(2M-2)$. Due to the symmetry of complete graphs, we do not need to track the particular edge we got rid of in $R_{2M}$. Let its adjacency matrix be
\begin{equation}\label{eq:RemovedEdges}
  A_{R_{2M}}=\begin{bmatrix}
               0_2 & \bbJ_{2,2M-2} \\
               \bbJ_{2M-2,2} & A_{K_{2M-2}} \\
             \end{bmatrix}.
\end{equation}
Then we define
\begin{equation}\label{eq:RextendedInflated}
  A_{R^\uparrow_{2M}}=\begin{bmatrix}
               \chi_n(0_2) & \chi_n(\bbJ_{2,2M-2}) \\
               \chi_n(\bbJ_{2M-2,2}) & A_{K^\uparrow_{n(2M-2)}} \\
             \end{bmatrix}
\end{equation}
which is another adjacency matrix of dimension $2nM$ and finally we introduce $\euR^\uparrow_M=R^\uparrow_{2M}\oplus R^\uparrow_{2M}$ from which we will sample submatrices corresponding to the investigated graphs $R_{2M}$. To this end, we find
\begin{align}\label{eq:ProbRext}
  \Pr{\euR^\uparrow_M}&=\binom{2n}{2}\binom{n(2M-1)}{2M-2}{1\over \bar n! \sqrt{\det{\s_{Q,\euR^\uparrow_M}}}} \nonumber\\
  &\times c^{2M}((2M-3)!!(2M-2))^2,
\end{align}
where the first binomial coefficient `picks up' the right number of zeros (two zeros for any $R_{2M}$) and the second coefficient counts the number of ways the rest of $R_{2M}$ is constructed. The remaining task is to calculate the determinant of  $A_{\euR^\uparrow_M}$ and the following Lemma will assist us.
 \begin{widetext}
\begin{lem}
   Let
   \begin{equation}
    W=\begin{bmatrix}
               0_k & \bbJ_{k,\ell} \\
               \bbJ_{\ell,k} & A_{K_\ell} \\
             \end{bmatrix}-d\,\bbI_{k+\ell}
    \equiv
     \begin{bmatrix}
       D & B \\
       B^\top & A \\
     \end{bmatrix}.
   \end{equation}
   Then
   \begin{equation}\label{eq:specW}
     \spec{W}=\big((-1-d)_1^{\ell-1},(-d)_1^{k-1},1/2(-1-2d+\ell\pm\sqrt{1-2\ell+\ell^2+4\ell k})\big).
   \end{equation}
\end{lem}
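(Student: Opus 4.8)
The plan is to exploit the two-part block symmetry of $W$ and diagonalize it by hand, since the vertex partition $\{1,\dots,k\}\cup\{k+1,\dots,k+\ell\}$ is equitable for the weighted graph $W$. Writing out the blocks explicitly gives $D=-d\,\bbI_k$, $B=\bbJ_{k,\ell}$, and $A=A_{K_\ell}-d\,\bbI_\ell=\bbJ_\ell-(1+d)\bbI_\ell$, so each block is a scalar multiple of an identity plus an all-ones matrix. I would therefore decompose $\bbR^{k+\ell}$ into the span of the two uniform vectors $e_1=(\mathbf{1}_k,0)$ and $e_2=(0,\mathbf{1}_\ell)$ together with the orthogonal complements consisting of vectors supported on a single block and summing to zero there.

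First I would handle the zero-sum eigenvectors, which are immediate. For $v=(u,0)$ with $\mathbf{1}_k^\top u=0$ one has $\bbJ_{\ell,k}u=0$, so $Wv=-d\,v$; this gives the eigenvalue $-d$ with multiplicity $k-1$. Symmetrically, for $v=(0,w)$ with $\mathbf{1}_\ell^\top w=0$ one has $\bbJ_{k,\ell}w=0$ and $\bbJ_\ell w=0$, hence $Wv=-(1+d)\,v$, giving $-1-d$ with multiplicity $\ell-1$. This already accounts for $(k-1)+(\ell-1)$ of the $k+\ell$ eigenvalues.

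The remaining two eigenvalues come from the two-dimensional subspace spanned by $e_1,e_2$, which is invariant because $We_1=-d\,e_1+k\,e_2$ and $We_2=\ell\,e_1+(\ell-1-d)\,e_2$ (using $\bbJ_{\ell,k}\mathbf{1}_k=k\mathbf{1}_\ell$, $\bbJ_{k,\ell}\mathbf{1}_\ell=\ell\mathbf{1}_k$, and $\bbJ_\ell\mathbf{1}_\ell=\ell\mathbf{1}_\ell$). The restriction of $W$ to this subspace is thus the $2\times2$ quotient matrix $\left(\begin{smallmatrix}-d&\ell\\k&\ell-1-d\end{smallmatrix}\right)$, whose trace is $\ell-1-2d$ and whose determinant is $d+d^2-d\ell-\ell k$. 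Solving the resulting quadratic yields the two eigenvalues $\tfrac12(-1-2d+\ell\pm\sqrt{\mathrm{tr}^2-4\det}\,)$, and I would finish by checking that $\mathrm{tr}^2-4\det=1-2\ell+\ell^2+4\ell k$, which reproduces the claimed expression under the square root.

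There is no serious obstacle here: once the equitable-partition decomposition is in place the rest is routine linear algebra. The only points requiring care are the bookkeeping in the last step — confirming that the span of $e_1,e_2$ together with the two zero-sum subspaces is genuinely all of $\bbR^{k+\ell}$, so that no eigenvalue is missed — and simplifying the discriminant so that the cross terms in $d$ cancel and leave exactly $(\ell-1)^2+4\ell k$.
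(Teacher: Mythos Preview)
Your proposal is correct and complete: the equitable-partition decomposition cleanly splits $\bbR^{k+\ell}$ into the two zero-sum subspaces (giving the eigenvalues $-d$ and $-1-d$ with the right multiplicities) and the $2$-dimensional span of $e_1,e_2$, and your $2\times2$ quotient matrix and its discriminant are computed correctly.

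The paper takes a different route. Rather than identifying invariant subspaces, it computes the characteristic polynomial of $W$ via the block-determinant (Schur complement) identity
\[
\det[W-\lambda\bbI_{k+\ell}]=\det[A-\lambda\bbI_\ell]\,\det\bigl[D-\lambda\bbI_k-B(A-\lambda\bbI_\ell)^{-1}B^\top\bigr],
\]
then evaluates each factor explicitly using the closed form for $(A-\lambda\bbI_\ell)^{-1}$ (a rank-one perturbation of a scalar matrix) and the identity $\det[\beta\bbI_k+\bbJ_k]=\beta^{k-1}(\beta+k)$. Your approach is more structural and arguably more transparent: it explains \emph{why} the multiplicities are $\ell-1$ and $k-1$ by exhibiting the eigenspaces, and it reduces the nontrivial part to a $2\times2$ problem with no matrix inversion needed. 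The paper's approach, by contrast, is a purely algebraic computation that does not require spotting the invariant decomposition in advance; it would generalize more mechanically to blocks that are not of the form scalar-plus-all-ones, at the cost of heavier manipulation.
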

\begin{proof}
  We calculate the characteristic polynomial of $W$:
  $$
  \det{[W-\la\bbI_{k+\ell}]}=\det{[D-\la\bbI_{k}-B(A-\la\bbI_{\ell})^{-1}B^\top]}\det{[A-\la\bbI_{\ell}]}=0.
  $$
  The second determinant reads
  \begin{equation}\label{eq:2ndCharPoly}
    \det{[A-\la\bbI_{\ell}]}=(\ell-1-d-\la)(1+d+\la)^{\ell-1}.
  \end{equation}
  For the first determinant we find
  $$
  (A-\la\bbI_{\ell})^{-1}={1\over(-\ell+1+d+\la)(1+d+\la)}\bbJ_\ell+{\ell-2-d-\la\over(-\ell+1+d+\la)(1+d+\la)}\bbI_\ell
  $$
  and so
  \begin{equation}
    B(A-\la\bbI_{\ell})^{-1}B^\top=-{\ell\over-\ell+1+d+\la}\bbJ_k.
  \end{equation}
  Therefore
  \begin{equation}
    \det{[D-\la\bbI_{k}-B(A-\la\bbI_{\ell})^{-1}B^\top]}
    =\det{\Big[{\ell\over-\ell+1+d+\la}\Big({-(d+\la)(-\ell+1+d+\la)\over\ell}\bbI_k+\bbJ_k\Big)\Big]}.
  \end{equation}
  By using $\det{[\a X_k]}=\a^k\det{X}_k$ and $\det{[\b\bbI_k+\bbJ_k]}=\b^{k-1}(\b+k)$ we find
  \begin{align}\label{eq:1stCharPoly}
    &\det{[D-\la\bbI_{k}-B(A-\la\bbI_{\ell})^{-1}B^\top]}\nonumber \\
    &={(-d-\la)^{k-1}(-\ell+1+d+\la)^{k-1}\over\ell^{k-1}}{\ell^k\over(-\ell+1+d+\la)^k}\Big({-(d+\la)(-\ell+1+d+\la)\over\ell}+k\Big).
  \end{align}
  By combining it with \eqref{eq:2ndCharPoly}  we finally get
  \begin{equation}\label{eq:bothPolys}
    \det{[W-\la\bbI_{k+\ell}]}=(-d-\la)^{k-1}\Big({-(d+\la)(-\ell+1+d+\la)\over\ell}+k\Big)(1+d+\la)^{\ell-1}=0
  \end{equation}
  and Eq.~\eqref{eq:specW} follows.
\end{proof}
To find the determinant in~\eqref{eq:ProbRext} we deduce from~\eqref{eq:specW}
\begin{equation}\label{eq:speccW}
     \spec{[cW]}=\big((c(-1-d))_1^{\ell-1},(-cd)_1^{k-1},c/2(-1-2d+\ell\pm\sqrt{1-2\ell+\ell^2+4\ell k})\big).
\end{equation}
and from Eq.~\eqref{eq:sigma} and $\s_{Q,\euR^\uparrow_M}=\s_{\euR^\uparrow_M}+\bbI_{4nM}/2$ we finally find
\begin{align}\label{eq:detRext}
  &\det{\s_{Q,\euR^\uparrow_M}}={16\over(-1 + c^2 d^2)^{k-1}(-1+c^2(1+d)^2)^{\ell-1}}\nonumber \\
  &\times{1\over(-4+c^2(-1-2d+\ell+\sqrt{1-2\ell+\ell^2+4\ell k})^2)(-4-c^2(-1-2d+\ell-\sqrt{1-2\ell+\ell^2+4\ell k})^2)},
\end{align}
where $k=2n$ and $\ell=n(2M-2)$.
\end{widetext}

\bibliographystyle{unsrt}

\bibliography{haf}

\end{document}